\title{Playing Mastermind  on quantum computers \footnote{The authors are ordered alphabetically.}}
\author{Lvzhou Li}{Institute of Quantum Computing and Computer Theory, School of Computer and Engineering, Sun Yat-sen University, China}{lilvzh@mail.sysu.edu.cn}{[orcid]}{[funding]}
\author{Jingquan Luo}{Institute of Quantum Computing and Computer Theory, School of Computer and Engineering, Sun Yat-sen University, China}{luojq25@mail2.sysu.edu.cn}{[orcid]}{[funding]}
\author{Yongzhen Xu}{Institute of Quantum Computing and Computer Theory, School of Computer and Engineering, Sun Yat-sen University, China}{xuyzh23@mail2.sysu.edu.cn}{[orcid]}{[funding]}
\authorrunning{Lvzhou Li, Jingquan Luo, Yongzhen Xu } 
\keywords{Mastermind, query complexity, quantum
algorithms}   
\begin{document}

\maketitle

\begin{abstract}
From the 1970s up to now,  
the classic two-player game, Mastermind, has attracted plenty of attention, not only from the public as a popular game, but also from the academic community as a scientific issue.
 Mastermind  with $n$ positions and $k$ colors is formally described as follows.  The codemaker privately chooses a secret  $s\in [k]^n$, and the codebreaker  want to determine $s$ in as few queries like $f_s(x)$  as possible to the codemaker. $f_s(x)$ is called a black-peg query if $f_s(x) =B_s(x)$, and  a black-white-peg query if $f_s(x)=\{B_s(x),W_s(x)\}$, where $B_s(x)$ indicates the number of positions where $s$ and $x$ coincide 
 and $ W_s(x)$  indicates the number
of right colors but being in the wrong position.  The complexity of a strategy is  measured by the number of queries used.

In this work we  study  playing Mastermind on quantum computers in both non-adaptive and adaptive  settings, obtaining  efficient quantum algorithms which are all exact (i.e., return the correct result with certainty) and show  huge  quantum speedups. The contributions are as follows.  (i)  Based on the discovery of new structure information, we construct two non-adaptive quantum algorithms  which determine the secret with certainty and  consume $k-1$  and at most $2\lceil \frac{k}{3}  \rceil$ black-peg queries, respectively. 
(ii)  If adaptive strategies are admitted, a more efficient  exact quantum algorithm is obtained, which consumes only $O(\sqrt{k})$  black-peg queries.   Furthermore, we   prove that   any quantum algorithm  requires at least $\Omega(\sqrt{k})$  black-peg queries. (iii) When  black-white-peg queries are allowed, we  propose an adaptive exact quantum algorithm with  $O(\lceil \frac{k}{n}  \rceil + \sqrt{|C_s|})$ queries, where $C_s$ is the set of colors occupied by $s$. This algorithm breaks through the  lower bound $\Omega(\sqrt{k})$  when $n \leq k \leq n^2$.   (iv) Technically,  we develop a three-step framework for designing quantum algorithms for the general string learning problem, which not only  allows  huge quantum speedups  on playing Mastermind, but also may shed light on exploring quantum speedups for  other string learning problems.

Our results show that  quantum computers have  a substantial  speedup  over classical computers on playing Mastermind.  In the non-adaptive setting, when $k \leq n$, the classical complexity is $\Theta(\frac{n \log k }{\max\{\log(n/k), 1\}})$  and  when $k > n$ the current best classical algorithm has  complexity $O(k\log k)$.  In the adaptive setting,  the classical complexity is    $\Theta(n\frac{\log k}{\log n} + \frac{k}{n})$.  Therefore, significant gap between quantum and classical computing on playing Mastermind is clearly visible. 
\end{abstract}

\section{Introduction}

One of the core issues in the field of  quantum computing is to discover more problems that admit quantum speedups, and  to further design explicit quantum algorithms for these problems. Mastermind seems a good candidate for such problems, since the research on Mastermind in classical computing has continued since the 1970s and is intricate, which indicates that the structure of the problem is elusive from the perspective of classical computing, whereas in this paper we will obtain more efficient quantum algorithms with huge quantum speedups.

Mastermind is a classic board  game  invented in 1970 by the Israeli telecommunication expert Mordechai Meirowitz and can go back to the early work of Erd\H{o}s and R\'enyi \cite{erdos1963} in 1963.
As mentioned in \cite{Anders2020},  before the commercial board game version of Mastermind was released in 1971,  variations of this game have been played earlier under other names, such as the
pen-and-paper based games of Bulls and Cows, and Jotto. The game has been played on
TV as a game show in multiple countries under the name of Lingo. Recently, a similar
web-based game has gained much attention under the name of Wordle.

\begin{figure}[htp]
    \centering
    \includegraphics[width=0.35\textwidth]{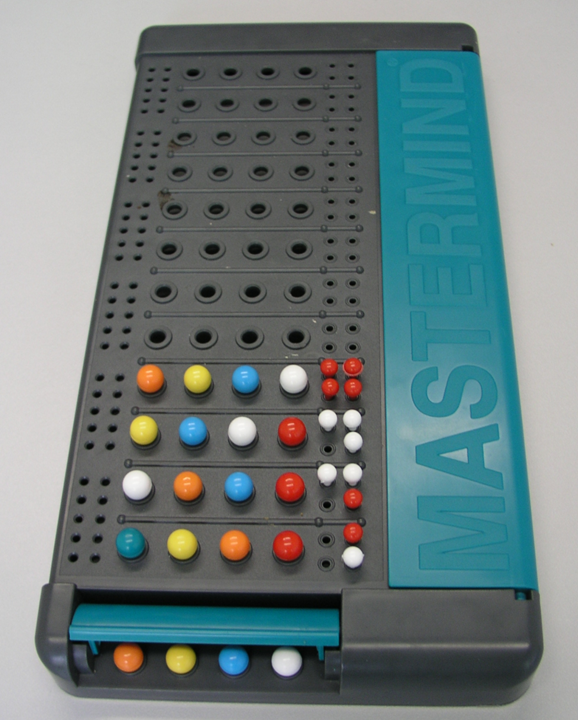}
    \caption{Mastermind game  with four pegs and six colors. The image comes from Ref.\cite{doerr2016} for academic purposes.}
    \label{figMastermind}
\end{figure}

In the commercial version of Mastermind, as shown in Figure \ref{figMastermind}, there are four pegs (positions), each of which could be selected from a set of six colors.
The codemaker  secretly chooses a color combination of four pegs. The goal of the codebreaker  is to identify the secret in as few guesses as possible. In each round, he guesses a color combination of length $4$ to tell the codemaker, and he receives two numbers (i.e., $BW_s(x)$ formally defined later) about how similar the guess is to the secret.  In 1977, Knuth \cite{knuth1976} proved  that $5$ queries are  sufficient for a deterministic algorithm to identify the secret.
In 1983, Chv{\'a}tal \cite{chvatal1983} first studied the generalized version of Mastermind, i.e. $n$ positions and $k$ colors, which will be the topic of this paper.  In the following, we first recall the formal definition.

\subsection{Mastermind}
Let $[k]=\{0,1,\dots k-1\}$ throughout this paper. The Mastermind game with $n$ positions and $k$ colors is formally described as follows. At the start of the game, the codemaker chooses a secret string $s\in [k]^n$. In each round, the codebreaker guesses a string $x\in[k]^n$ and the codemaker replies with $B_s(x)$ or $BW_s(x)$.  The codebreaker ``wins'' the game if he  gets the secret string $s$, with the goal being  to win the game  by using as few queries to $B_s(x)$ or  $BW_s(x)$  as possible.  When we mention ``complexity'' in this paper, it always means the number of queries used by the codebreaker. We call the game \textbf{Black-peg Mastermind} when only the black-peg query is allowed, and \textbf{Black-white-peg Mastermind} otherwise. It's obvious that any algorithm for  Black-peg Mastermind  works for  Black-white-peg Mastermind, but the reverse is not true.
\begin{itemize}
    \item[(a)] \textbf{Black-peg query}:  A black-peg query means an invocation to the  function $B_s$ associated with  $s\in [k]^n$ that returns $B_s(x) =|\{i \in \{1,2,\dots, n\}: s_i = x_i  \}|$ for any $x\in [k]^n$ indicating the number of positions where $s$ and $x$ coincide.  
 \item[(b)] \textbf{Black-white-peg query}:  Similarly, a black-white-peg query means an invocation to the  function $BW_s$ that returns $BW_s(x)=\{B_s(x),W_s(x)\}$  for any $x\in [k]^n$, with $ W_s(x) = \max_{\sigma \in P_n} |\{i \in \{1,2,\dots, n\}: s_i = x_{\sigma(i)}  \}| -B_s(x)$ indicating the number of  right colors but
being in the wrong position, where $P_n$ denotes  the set of all permutations of the set $\{1,2,\dots,n\}$.
        \end{itemize}

Mastermind has sparked a flurry of research, with scholars studying different variations. The components of  variations are as follows \cite{chvatal1983,Berger2018}:
\begin{itemize}
    \item {\bf The color number $k$ and the position number $n$.} For example, the original version considered by Knuth \cite{knuth1976} is with $k=6$ and $n=4$. When $k=2$, the problem  reduces to identifying a binary string, and the problem considered by Erd\H{o}s and R\'enyi \cite{erdos1963} is essentially equivalent to this problem.  The degree to which people understand the complexity of Mastermind depends on the relationship between $k$ and $n$. Thus,  the following cases were usually separately considered in classical computing:   $k=n$,  $k<n^{1-\epsilon}$ with $\epsilon > 0$, and $k>n$. In quantum computing, the distinction among these different cases seems unnecessary.

    \item {\bf The types of query information.} The  game is called Black-peg Mastermind when only the black-peg query is allowed, and Black-white-peg Mastermind otherwise. From the definitions, it can be seen that the black-white-peg query offers more information than the black-peg query, which has  also been strictly verified in  classical computing, since  Black-white-peg Mastermind  has a lower complexity than the corresponding  black-peg version as shown in \cite{doerr2016,Anders2020,Anders2022}.  The similar phenomenon also exists in the quantum situation as will be shown in this paper.

    \item {\bf The query strategy.}
    Depending on the strategy of how an algorithm makes a query, algorithms can be  divided into two kinds: adaptive  and non-adaptive. In the adaptive strategy, the queries  can be made sequentially one by one, and  the next query can depend on the previous queries and the answers. In the non-adaptive strategy, all the query strings must be supplied in parallel at once, and then the secret is determined according to the returned answers without submitting any additional queries. Mastermind in the non-adaptive case was also called static Mastermind \cite{Goddard2003}. Note that the adaptive strategy is  general and the non-adaptive one is more limited.
    \item {\bf Whether  errors are allowed.} Deterministic algorithms output results without errors. Randomized (probabilistic) and quantum algorithms  usually output  results with bounded error. Sometimes, one can  de-randomize a randomized algorithm, obtaining a deterministic one. Also, {\it exact } quantum algorithms  that   output  results with certainty have received much attention. For example,  the  Deutsch-Jozsa algorithm \cite{deutsch1992rapid} and the Bernstein-Vazirani algorithm\cite{Bernstein1997} are all exact. Simon's algorithm can also be improved to be exact \cite{BrassardH97}. In this paper, the quantum algorithms constructed for  Mastermind  are all exact.
    
    \item {\bf Whether repeated colors  are allowed.} {Unless otherwise specified, the Mastermind game we are talking about in this paper allows color repetition, that is,  both $s$ and $x$ are allowed to have the same color in different positions. However, it is worth mentioning that there are some papers considering Mastermind without color repetition \cite{Ouali2018,Glazik2021,Larcher2022}.}
\end{itemize}

From the 1970s up to now,  
Mastermind has attracted a lot of attention, not only from the public as a popular game, but also from the academic community as a scientific issue, especially from the field of  mathematics and computer science (e.g. a partial list of references \cite{erdos1963,knuth1976,chvatal1983,Goodrich2009black,doerr2016,Jiang2019a}). Mastermind has been shown to have a closed relation to  information theory and graph theory \cite{Metric2007,Jiang2019a}.  Also, it has  been used to study other problems, such as being a benchmark problem for intelligent algorithms (genetic and evolutionary algorithms) \cite{Kalisker2003},  understanding the intrinsic difficulty of  heuristics\cite{Droste2006}, simulating pair attacks on genomic data\cite{Goodrich2009a},  and cracking bank ciphers\cite{Focardi2010}.


\subsection{Contributions}

While the classical strategies for playing Mastermind have been studied extensively and deeply, a natural question follows:  Can we construct more efficient quantum algorithms for playing Mastermind? 

We answer the above question affirmatively, obtaining    algorithms with huge quantum speedups for Mastermind in both  non-adaptive and adaptive settings. Our results are summarized in  Table \ref{tableQuantum}  and  the classical  results are presented in Table \ref{tableClassical}.   Note that  the lower bound for the adaptive setting is a trivial lower bound for the non-adaptive setting, and an algorithm with black-peg queries leads to a trivial algorithm with black-white-peg queries.

\begin{table}[htb] 
\setlength{\belowcaptionskip}{10pt}
\begin{center}   
\caption{Results for  Mastermind in the quantum model.}  
\label{tableQuantum} 

\begin{tabular}{|c|c|c|c|}   
\hline   \textbf{} &Adaptive &   Non-adaptive \\  
\hline \makecell{  Black-peg   }  & $\Theta(\sqrt{k})$ [Theorems \ref{theorem:adaptive}, \ref{lowerbound-adaptive-blackpeg}] &  $ O(k) $[Theorem \ref{theorem:Nonadapt_OK}] \\ 
\hline    Black-white-peg  &  $O(\min(\sqrt{k}, \lceil \frac{k}{n}  \rceil + \sqrt{|C_s|}))$ [Theorem \ref{theoremadabwp}] &  $/$ \\ 
\hline   
\end{tabular}   
\end{center}   
\end{table}

\begin{table}[htb] 
\setlength{\belowcaptionskip}{10pt}
\begin{center}   
\caption{Results for  Mastermind in the classical model. Note that ``classical'' here means that the algorithm can be deterministic or randomized.}  
\label{tableClassical} 
\resizebox{\linewidth}{!}{
\begin{tabular}{|c|c|c|c|}   
\hline   \textbf{} & \scriptsize Adaptive & \multicolumn{2}{c|}{\scriptsize Non-adaptive } \\  
\hline   \multirow{2}*{\scriptsize Black-peg }  & \multirow{2}*{ \makecell{\scriptsize $\Theta(n\frac{\log k}{\log n} +k)$  \\ \scriptsize \cite{Anders2020,Anders2022} }}  & \scriptsize $k \leq n $ & \scriptsize $k>n$\\  
\cline{3-4}   ~ & ~ &  \makecell{\scriptsize $\Theta(\frac{n \log k }{\max\{\log(n/k), 1\}})$ \\ \scriptsize \cite{chvatal1983,doerr2016} }  & \makecell{ \scriptsize $\Omega(n\log k) \sim  O(k \log k)$ \\ \scriptsize \cite{doerr2016,Berger2018} }\\ 
\hline   \multirow{2}*{\scriptsize Black-white-peg }  & \multirow{2}*{ \makecell{\scriptsize $\Theta(n\frac{\log k}{\log n} +\frac{k}{n})$  \\ \scriptsize \cite{Anders2020,Anders2022} }}   & \scriptsize $k \leq n^{1-\epsilon} $ for any fixed $\epsilon>0$& \scriptsize others\\  
\cline{3-4}   ~ & ~ &  \makecell{\scriptsize $\Theta(\frac{n \log k}{\log n})$ \scriptsize \cite{chvatal1983} }  & /\\ 
\hline   
\end{tabular}}   
\end{center}   
\end{table}

More specifically, our main results  are as follows. \begin{itemize}
    \item [(i)]{\bf Non-adaptive setting with black-peg query.}  This is the most demanding situation: the query information  is weak, and the restriction on the query strategy is strong.  We obtain two  non-adaptive quantum algorithms   for Black-peg Mastermind  that  return the secret with certainty and consume $k-1$  and at most $2\lceil \frac{k}{3}  \rceil$ black-peg queries, respectively   (Theorem \ref{theorem:Nonadapt_OK}).   It must be pointed out that the two algorithms rely heavily on the discovery of  new structure information of Mastermind, rather than simply applying any existing algorithm to solve the problem.  In addition, it seems not easy to design an efficient non-adaptive quantum algorithm because the non-adaptive characteristic is easy to be destroyed during the algorithmic design process.
 
\item [(ii)] {\bf Adaptive setting with black-peg query.} In this situation, the restriction on the query strategy is removed.
An adaptive quantum algorithm  is constructed for Black-peg Mastermind  that uses $O(\sqrt{k})$ black-peg queries and  succeeds with certainty (Theorem \ref{theorem:adaptive}). Also we prove that any quantum algorithm needs at least $\Omega(\sqrt{k})$ black-peg queries (Theorem \ref{lowerbound-adaptive-blackpeg}).  It is worth pointing out that we need to be very careful when proving lower bounds, since some subtle mistakes are likely to occur. Note that   the  non-adaptive algorithm is more practical than the  adaptive one, since the former needs only to run a shorter quantum circuit  $O(k)$ {\it times}, whereas the latter runs a longer quantum circuit consisting of  $O(\sqrt{k})$ {\it blocks}. 
\item [(iii)]{\bf Adaptive setting  with black-white-peg query.}  In this situation, the query information is strong, and the restriction on the query strategy is removed.
We proposed an adaptive quantum algorithm that returns the secret with certainty and consumes  $O(\lceil \frac{k}{n}  \rceil + \sqrt{|C_s|})$ black-white-peg queries, where $C_s$ is the set of colors occupied by $s$  (Theorem \ref{theoremadabwp}). This algorithm can break through the lower bound $\Omega(\sqrt{k})$ when $n \leq k \leq n^2$. For instance, when $k= n^{\frac{3}{2}}$, we have $O(\lceil \frac{k}{n}  \rceil + \sqrt{|C_s|})=O(\sqrt{n})$, but the lower bound is $\Omega(\sqrt{k})=\Omega(n^{\frac{3}{4}})$.

\end{itemize}

By comparing the results in   Tables \ref{tableQuantum}   and \ref{tableClassical}, one can see that  quantum algorithms always have a substantial speedup advantage over classical counterparts in both  non-adaptive and adaptive settings.  (1) In the non-adaptive setting, our quantum algorithm needs only $O(k)$ black-peg queries that has no relation with $n$. Contrarily,  when $k \leq n$  the classical complexity is $\Theta(\frac{n \log k }{\max\{\log(n/k), 1\}})$   monotonically increasing with respect to $n$, and  when $k > n$ the current best classical algorithm with  $\Omega(k\log k)$-complexity is still worse than our quantum one.  (2) In the adaptive setting, our quantum algorithm needs only $O(\sqrt{k})$ black-peg queries. In contrast, the classical complexity is  $\Theta(n\frac{\log k}{\log n} +\frac{k}{n})$ (or  $\Theta(n\frac{\log k}{\log n} + k)$), which is  $O(\frac{n}{\log n})$ when $n$ is prominent and is $O(k)$ when $k$ is prominent.

\subsection{Techniques}
In this section, we outline the ideas for obtaining the results. 
First note that the Mastermind problem considered here is an instance of the string learning problem: Alice has a secret  string $s$  and Bob wants to identify this secret string
by asking  as few queries as possible to an oracle provided by Alice that answers some piece of information of $s$.  As will be mentioned in the related work section,  a common idea in the literature about quantum algorithms for this problem is as follows: first convert the  provided  oracle  into the inner product oracle, and then apply the Bernstein-Vazirani algorithm \cite{Bernstein1997}.  Following this idea, we can  obtain a quantum algorithm with $O(k\log k)$ black-peg queries for Mastermind, which  has already been superior to the classical ones. Although the algorithm is not readily available and  require some skillful handling, we will not present it here, and one can refer to  Appendix \ref{appendix:B}. 

One of the core issues in the field of quantum computing is   clearly expressed by ``How Much Structure Is Needed for Huge Quantum Speedups?''(the title of  Aaronson' talk at the 28th Solvay Physics Conference \cite{aaronson2022structure}). 
Thus, in this paper we will reveal  new structure information that supports more efficient and even optimal  quantum algorithms for Mastermind in both non-adaptive and adaptive settings. We believe that the discovered structure  information has general implications for string learning problems and may shed light on quantum algorithmic design for other string learning  problems.

\noindent{\bf  Non-adaptive setting with black-peg query.}

Quantum algorithms are constructed for  $k\geq 3$ and $k=2$, respectively, with different ideas.  
Two algorithms (\textbf{Algorithm \ref{algorithm:nonadaptive2}} and \textbf{Algorithm \ref{algorithm:nonadaptive333}}) will be designed  for  $k\geq 3$,  which  are  more inspiring, ingenious, and general. 
Thus, in the following we focus on the case of  $k\geq 3$.  

 In order to  design quantum algorithms for the case of  $k\geq 3$, 
 we develop a three-step framework for designing quantum algorithms for the string learning problem, by discovering  a new structure  that not only  allows huge quantum speedups  on Mastermind in the non-adaptive setting, but also is very likely helpful for exploring quantum speedups on  other string learning problems with different types of query oracles.  More specifically, our framework is as follows:

{\it (a) Discover new structure allowing  quantum speedups.}  For the secret $s\in[k]^n$, we first define the $k\times n$ characteristic matrix $M$  with rows  indexed by colors in $[k]$  and columns indexed by positions in $\{1,2,\dots, n\}$:
\begin{equation*}
    M (c_i, j)= 
        \begin{cases}
            & 1, ~~~s_j = c_i , \\
            & 0, ~~~\text{otherwise}.
        \end{cases}
\end{equation*} 
Then we have
    $s = \sum\limits_{c_i\in [k]} c_i \cdot M (c_i,*),$ 
where $M (c_i,*)$  denotes the $c_i$-th row  of $M$.

A crucial observation is  that  the set $\{M^{(c_2,c_1)}, M^{(c_3,c_1)}, \dots, M^{(c_k,c_1)}\}$ where $M^{(c_l,c_h)}$ denotes the sum of the $c_l$-th and $c_h$-th rows of $M$  suffices to determine the set $\{ M (c_i,*): c_i\in[k]\}$ and thus determine $s$ (see Lemma \ref{lemma:structure}). Also, we observe that  
 $M^{(c_g, c_l)}$ and $M^{(c_l, c_h)}$  with $c_g \neq c_l \neq c_h$ suffice to determine  the three rows of $M$: $M(c_g,*), M(c_l,*)$, and $ M(c_h,*)$, and furthermore, $s$ can be determined by $\lceil \frac{k}{3}  \rceil$ pairs in the form of  $\{M^{(c_g, c_l)}, M^{(c_l, c_h)}\}$  (see Lemma \ref{lemma:structure2}).

{\it (b) Design quantum procedure for learning $M^{(c_l,c_h)}$.}
For any two different colors $c_l, c_{h}\in [k]$,  there is a quantum procedure to learn $M^{(c_l,c_h)}$ by using one query to the following oracle  (see Lemma \ref{lemmaFind2colorposition}):
$$ {B_s^{(c_l, c_h)}} | x \rangle | b \rangle = | x \rangle | b \oplus_{2^m} {B_s^{(c_l, c_h)}}(x) \rangle, $$ 
where  $m$ is required to satisfy $2 ^ m \geq (n+1)$, and ${B_s^{(c_l, c_h)}}: [2]^n\rightarrow \{0,1,\cdots,n\}$ is  defined as 
   $${B_s^{(c_l, c_h)}}(x) = \sum_{i=1}^n (\delta_{x_i 0}\delta_{s_i c_l} + \delta_{x_i 1}\delta_{s_i c_h}),$$
 which indicates how many positions $i$ satisfy $s_i$ takes color $c_l$ when $x_i = 0$ or  $c_h$ when $x_i = 1$.

{\it (c) Construct $B_s^{(c_l, c_h)}$ from $B_s$.} It is proved that 
the oracle $B_s^{(c_l, c_h)}$ can be constructed by using one query to the black-peg oracle $B_s$ (see Lemma \ref{lemmaBsc}).
      
Therefore,  combining Lemmas \ref{lemma:structure}, \ref{lemmaFind2colorposition} and \ref{lemmaBsc} leads to the non-adaptive quantum algorithm  with $k-1$ black-peg queries, and combining Lemmas \ref{lemma:structure2}, \ref{lemmaFind2colorposition} and \ref{lemmaBsc}  results in the one with  at most $2\lceil \frac{k}{3}  \rceil$ queries.

Please note that the first two steps in our framework are independent of the Mastermind problem and apply to various types of string learning problems, with only the last step being for the Mastermind problem. Therefore, 
one may follow this framework to address  other  string learning problems, with the focus being put on the last step, that is, consider how to construct  $B_s^{(c_l, c_h)}$ from the  oracle in hand.

\noindent{\bf Adaptive setting with black-peg query.}

The idea of the $O(\sqrt{k})$-complexity adaptive quantum algorithm (\textbf{Algorithm \ref{algorithm:adaptive}})   is to   apply $n$ Grover searches \cite{Grover1996} synchronously on $n$ positions. Note that the proportions of  target states in the $n$ synchronous Grover searches are all $\frac{1}{k}$. Thus,  we can apply the exact Grover search\cite{ Brassard2002,long2001grover, hoyer2000arbitrary} to make the algorithm error-free. However,  more careful consideration is required on how to implement the general oracle used in the exact Grover search.

Regarding the proof of the  lower bound of complexity, we would like to remind  that this is not at all as simple as imagined. Why emphasize this? Because in previous versions of this paper it was claimed to have obtained the the tight lower bounds for both adaptive and non-adaptive settings, no matter whether black-peg  or black-white-peg queries are used.  Nobody  ever told us an error on the proof, except for thinking that the proof was too simple. Here we present this seemingly correct conclusion and proof  in the box.

\begin{tcolorbox}
Result: For the Mastermind game with $n$ positions and $k$ colors, any non-adaptive quantum algorithm must require $\Omega(k)$   black-peg  or black-white-peg queries, and any adaptive quantum algorithm must require $\Omega(\sqrt{k})$ black-peg  or black-white-peg queries.

Proof: When $n = 1$, one can see that a black-white-peg query is equivalent   to a black-peg query, and   the problem reduces to  the unstructured search problem: searching for one color in $k$ colors, whose non-adaptive and adaptive quantum lower bounds  are well-known to be $\Omega(k)$  \cite{Koiran2010} and $\Omega(\sqrt{k})$ \cite{BennettBBV97}, respectively.

Analysis: This proof seems simple and correct, but unfortunately it is wrong. The  reason is that the lower bound of $n = 1$ cannot simply be considered as a lower bound of the general problem. As an example to refute the above conclusion, we have constructed  an algorithm  with black-white-peg
queries whose complexity can break through the lower bound  $\Omega(\sqrt{k})$.

\end{tcolorbox}

So far, what we can prove is that any quantum algorithm for Mastermind needs at leas
 $\Omega(\sqrt{k})$ black-peg queries, and the idea is as follows.
Denote by $B(k, n)$ the Black-peg Mastermind  with with $n$ positions and $k$ colors, and denote by $Q(k, n)$ the quantum query complexity of $B(k, n)$. We will show that $Q(k, n) \geq Q(k, m)$ if $n \geq m$, which guarantees $Q(k, n) \geq  Q(k, 1)$. On the other hand,   $B(k, 1)$ is actually the unstructured search problem: searching for one color in $k$ colors, whose  quantum lower bound  is well-known to be $\Omega(\sqrt{k})$ \cite{BennettBBV97}. Thus, we have $Q(k, n) \geq  Q(k, 1)=\Omega(\sqrt{k})$. It is worth pointing that this proof idea cannot be applied to Black-white-peg Mastermind, since the black-white-peg oracle does not have the {\it separability} property that the black-peg oracle has. Actually, the lower bound $\Omega(\sqrt{k})$ no longer holds for  Black-white-peg Mastermind. Also,  it seems infeasible to obtain a nontrivial lower
bound for non-adaptive algorithms by using the similar idea, since the proof process will destroy the non-adaptive
characteristics of the algorithm.

\noindent{\bf Adaptive setting with black-white-peg query.}

The $O(\lceil \frac{k}{n}  \rceil + \sqrt{|C_s|})$-complexity quantum algorithm for  Black-white peg Mastermind consists of two steps: (i) first apply the Bernstein-Vazirani algorithm  $\lceil \frac{k}{n}  \rceil$ times to  learn the color set $C_s\subseteq [k]$ occupied by the secret $s$, and (ii) then run the  above $O(\sqrt{k})$-complexity algorithm  to learn $s$ with $k$ replaced by $|C_s|$. Thus the total complexity is $O(\lceil \frac{k}{n}  \rceil + \sqrt{|C_s|})$.

\subsection{Related Work}

\subsubsection{Classical  algorithms   for Mastermind}

Some results about the classical complexity and algorithms for Mastermind are summarized in the following, covering some but not all related to our study. 

\noindent\textbf{Non-adaptive complexity for black-peg Mastermind.}
In 1983, Chv{\'a}tal  \cite{chvatal1983}  first studied   black-peg Mastermind in the non-adaptive setting, proving that  when $k<n^{1-\epsilon}$ with  $\epsilon > 0$,   $(2+\epsilon) n \frac{1+2 \log k}{\log (n/k)}$ queries are sufficient to determine any secret string, matching the information-theoretic lower bound $\frac{n \log k}{\log C_{n+2}^2}$ to a constant factor. Until 30 years later,  Doerr, Doerr, Sp{\"o}hel and Thomas in a breakthrough paper published in Journal of the ACM \cite{doerr2016} (appearing first in SODA 2013) proved that the non-adaptive query complexity  is $\Theta (n \log k / \max\{\log(n/k), 1\})$ for $k \leq n $, which extends Chv{\'a}tal's result. For $k>n$, the  best upper bound is $O(k \log k)$\cite{doerr2016,Berger2018}, and has a gap away from the lower bound $\Omega(n\log k)$\cite{doerr2016}. 

The non-adaptive complexity of black-peg Mastermind is closely related to  two problems. The coin-weighing problem with a spring scale by Shapiro and Fine in 1960 \cite{Shapiro1960} is equivalent to  black-peg Mastermind with two colors. The  minimum number of queries for black-peg Mastermind with $n$ positions and $k$ colors in the non-adaptive setting is equivalent to the metric dimension of the Hamming graph \cite{Metric2007}. From this perspective, Jiang and Polyanskii \cite{Jiang2019a} recently showed that the minimum number of queries is $(2+o(1))n\frac{\log k}{\log n})$ for any constant $k$.

For $k \leq n $, the non-adaptive  complexity of black-white-peg Mastermind is the same as black-peg Mastermind,  since the non-adaptive strategy using only black-peg queries has reached the entropy lower bound.  In contrast, for $k>n$  it is still not clear whether the non-adaptive strategy with  black-white-peg queries can reduce currently the best bound  $O(k\log k)$ achieved by  
the  one with only black-peg queries \cite{doerr2016,Berger2018}.

\noindent\textbf{Adaptive complexity for black-peg Mastermind.} Chv{\'a}tal \cite{chvatal1983} gave a deterministic adaptive algorithm using $2(n\lceil \log n \rceil -2^{\lceil \log n \rceil }+1)$ guesses for $k=n$.
Subsequently, an algorithm with $n\lceil \log n \rceil + \lceil (2-1/k)n\rceil +k$ queries was proposed by Goodrich \cite{Goodrich2009black} for any parameters $n$ and $k$. This was further improved by J\"ager and Peczarski \cite{Jager2011} to $n \lceil \log n \rceil - n + k + 1$ for the case $k > n$ and $n \lceil \log k \rceil + k$ for the case $k \le n$. For $k=n$, it is worth noting that there is a gap  $\log n$ between the upper bound $O(n \log n)$ in the above results and the entropy lower bound $\Omega(n)$. This gap was reduced to $\log \log n$
by Doerr, Doerr, Sp{\"o}hel and Thomas \cite{doerr2016}. They gave the first separation  between the adaptive and non-adaptive strategies in the case of  $k=n$.  Until recently, Martinsson and Su \cite{Anders2020} presented for the first time a randomized algorithm with query complexity $O(n)$, closing the gap with the lower bound $\Omega(n)$, and  proved that the randomized  complexity of black-peg Mastermind is $\Theta(n\frac{\log k}{\log n} +k)$ for any $n$ and $k$  based on the results of \cite{chvatal1983} and \cite{doerr2016}.
In 2022, Martinsson \cite{Anders2022} achieved the same deterministic  complexity utilizing a general query game framework.

\noindent\textbf{Adaptive  complexity for black-white-peg Mastermind.}
An upper bound of adaptive  complexity  of   black-white-peg Mastermind was shown to be  $2n\log k +4n$ for $n\leq k \leq n^2$ and 
$\left \lceil k/n \right \rceil+ 2n\log k +4n$ for $k\geq n$  by Chv{\'a}tal \cite{chvatal1983}. For $k\geq n$, it was  improved to $2n\lceil \log n \rceil +2n +\lceil k/n \rceil +2$ by Chen, Cunha, and Homer \cite{Chen1996}. Also, Doerr, Doerr, Sp{\"o}hel and Thomas\cite{doerr2016} proved that $\Omega(n\log \log n + \frac{k}{n})$  queries are enough to determine any secret for $k\geq n$.
Recently, Refs.\cite{Anders2020,Anders2022} proved that the randomized  and deterministic  complexities are both $\Theta(n\frac{\log k}{\log n} + \frac{k}{n})$ for any $n$ and $k$.

\subsubsection{Quantum algorithms related to Mastermind.}
Although some anonymous reviewers told us that they and their collaborators considered Mastermind's quantum algorithm very early, 
 to the best of our knowledge, surprisingly there has been few publications related to the quantum complexity/algorithms of Mastermind, except  Refs. \cite{Buhrman2008,hunziker2002quantum}.  In fact, Ref. \cite{Buhrman2008} is just an abstract and one cannot verify the correctness of the conclusions. One may be curious why no full paper has been published. More importantly, assuming that the conclusions in Ref. \cite{Buhrman2008} were correct, one can see the obvious gap between the conclusions obtained there and  here.  It was claimed that 
there exist quantum algorithms with $O(\sqrt{k})$ queries for the case $k \leq n$, and with $O(n)$ queries for the case $n \leq k \leq n^2$, but no algorithm was obtained for the general case.
One message  one can learn from the above is that the study of quantum algorithms for Mastermind has attracted attention very early from the academic community, but the problem has not been solved prior to our work.

Ref. \cite{hunziker2002quantum} was not  devoted directly to Mastermind, because neither the word ``Mastermind'' nor the word ``game'' appeared there.
 Hunziker and Meyer \cite{hunziker2002quantum}  considered the problem of identifying a base $k$ string $a$ given an oracle $h_a$ which returns $h_a(x)=dist(x, a)\bmod r$ with $r=\max \{2,6-k\}$, i.e., the Hamming distance between the query string $x$ and the solution $a$ modulo $r$. This problem is  similar to the black-peg Mastermind game but with a slightly different oracle.
For the convenience of readers, we include in Appendix \ref{appendix:A} the algorithm C with $k>4$  proposed by   \cite{hunziker2002quantum}  which succeeds with  probability  $\frac{1}{2} + \epsilon$ when $n < -k\ln(\frac{1}{2} + \epsilon)$.  Hunziker and Meyer \cite{hunziker2002quantum}  claimed that the algorithm  can be  adjusted to an exact version of Grover's algorithm by the methods in  \cite{ long2001grover, hoyer2000arbitrary}, but this is NOT true (On can refer to Appendix \ref{appendix:A} for the reason). Thus, they obtained only an $O(\sqrt{k})$ algorithm for the case of $n < -k\ln(\frac{1}{2} + \epsilon)$, and the complexity $O(\sqrt{k})$ no longer holds for the general $n$.

\subsubsection{Quantum algorithms/complexity for string learning} 
As mentioned before, the Mastermind problem considered here is an instance of the string learning problem: Alice has a secret  string $s$  and Bob wants to identify this secret string
by asking  as few queries as possible to an oracle provided by Alice that answers some piece of information of $s$.  
There have been some quantum algorithms for string learning problems with different query oracles \cite{Bernstein1997,Dam98, CleveIGNTTY12,IwamaNRT12,AmbainisM14,220611221}. A common idea of these quantum algorithms is to first convert the original oracle into the inner product oracle, and then apply the Bernstein-Vazirani algorithm \cite{Bernstein1997},  as shown in the following examples.
\begin{itemize}
    \item \textbf{Inner product query:} Return the inner product of the input string $x \in [2]^n$ and  the secret string $s \in [2]^n$. 
    Any classical algorithm needs $n$ queries, and the Bernstein-Vazirani algorithm \cite{Bernstein1997} can learn $s$ using a single query with certainty.
    \item \textbf{Standard value query:} Return the value $s_i$ of the input $i$ for $s\in [2]^n$.  Also, any classical algorithm needs $n$ queries. Surprisingly, van Dam \cite{Dam98} proposed a bounded-error quantum algorithm using $n/2 + O(\sqrt{n})$ queries based on the Bernstein-Vazirani algorithm \cite{Bernstein1997}. Later, the matching lower bound was proved in Ref.\cite{Farhi1999}.
    \item \textbf{Balanced query:} Compare the weight of any pair of subsets of $s\in [2]^n$. Classically, the query complexity of the  counterfeit coin problem is $\Theta ( d \log (n/d))$, where $d$ is the Hamming weight of $s$. However, there is a quantum algorithm in Ref.\cite{IwamaNRT12 } based on the Bernstein-Vazirani algorithm \cite{Bernstein1997} using only $O(d^{1/4})$ queries for this problem.
     \item \textbf{Group testing query:} Check if there is $1$  in a substring $s_A$ of $s \in [2]^n$ indexed by  the subset $A \subseteq \{1,2,\dots,n\}$. 
The classical query complexity of the group testing problem is $\Theta ( d \log (n/d))$ where the Hamming weight of $s$ is at most $d$. In 2014, Ambainis and Montanaro \cite{AmbainisM14} proposed a quantum algorithm based on the Bernstein-Vazirani algorithm \cite{Bernstein1997} for this problem using $d \log d$ queries on average, and they also gave a lower bound $\Omega(\sqrt{d})$. This gap was closed by Belovs \cite{Belovs15} via the adversary bound.
\end{itemize}

In addition, there are some interesting works considering  quantum algorithms for   gradient estimation\cite{jordan2005fast, gilyen2019optimizing} and for multivariate mean estimation\cite{cornelissen2022near, van2021quantum}, and a crucial step of these algorithms is also to learn a string. Essentially,  these references have the similar idea: (i) the considered problem is to find an approximation  to the given precision of an $n$-dimensional real vector $v$ hidden by an oracle, (ii)  the problem can be reduced to learning a string $s\in [k]^n$  encoding the approximation of $v$, and (iii) then  convert the given oracle into the inner product oracle, and thus apply the extended Bernstein-Vazirani algorithm \cite{mosca1999quantum} to obtain $s$.

As mentioned before, we can also construct the  $O(k\log k)$-complexity algorithm for Mastermind  following the BV-algorithm-based idea (see   Appendix \ref{appendix:B}). But, here we  obtain more efficient quantum algorithms based on new structure information.

\section{Preliminaries}
Some notations and notion used throughout this paper are introduced here, whereas others will be defined when they appear for the first time.  
$C^d$ denotes a $d$-dimensional Hilbert space.  $\oplus_m$ stands for the operation of modulo $m$ addition. $|A|$ is the cardinality of set $A$. For a positive integer $k$, we denote $\{0, 1, 2, \cdots k - 1\}$ by $[k]$.
For a function $F: A\rightarrow [m]$,  we will use the same notation $F$ to denote the quantum implementation for $F$, usually called {\it quantum oracle}, which works as  $F\ket{x}\ket{y}=\ket{x}\ket{y\oplus_m F(x)}$  for $x\in A$ and $y\in [m]$.  $\delta_{ij}$ indicates whether $i$ equals $j$:   $$\delta_{ij} = 
        \begin{cases}
            & 1, ~~~i=j, \\
            & 0, ~~~i\neq j.
        \end{cases} $$

The quantum Fourier transform on a  $d$-dimensional Hilbert space, denoted by $QFT_d $, is defined by
\begin{equation}
    \label{qft}
   QFT_d\ket{l}= \frac{1}{\sqrt{d}} \sum _{j=0}^{d-1} \omega^{lj}\ket{j},
\end{equation}
with $\omega=e^{2\pi i /d}$  and $l\in\{0,1,\cdots, d-1\}$. The inverse $QFT_d$, denoted by $QFT_d^\dagger $, is defined by

\begin{equation}
   QFT_d^\dagger \ket{l}= \frac{1}{\sqrt{d}} \sum _{j=0}^{d-1} \omega^{-lj}\ket{j}.
\end{equation}

\
\section{Non-adaptive Quantum Algorithm}
In this section, we  consider non-adaptive quantum algorithms for Mastermind. 
We  will construct two quantum algorithms using  $k-1$   and  at most $2\lceil \frac{k}{3}  \rceil$ queries, respectively, for $k\geq 3$ in Sec. \ref{sec: non-adaptive-general}, and another algorithm  using one black-peg queries for $k=2$ in Sec. \ref{sec: non-adaptive-special}. 

\subsection{Non-adaptive Quantum Algorithm for $k\geq 3$}\label{sec: non-adaptive-general}
By first converting
the provided oracle into the inner product oracle and then applying the Bernstein-Vazirani
algorithm \cite{Bernstein1997}, we can obtain a quantum algorithm with $O(k\log k)$ black-peg queries for Mastermind. This algorithm has already beaten the classical algorithms. One can refer to Appendix \ref{appendix:B} for more details and note that the algorithm is adaptive. 
Here we will propose more efficient quantum algorithms. Specifically, we develop a  three-step framework for designing quantum algorithms for the  string learning problem, by discovering  a new structure  that not only  allows huge quantum speedups  on Mastermind in the non-adaptive setting, but also may shed light on exploring quantum speedups for  other string learning problems with different types of query oracles.

\subsubsection{New structure allowing  quantum speedups}

Given a string $x \in [k]^n$ and  $c \in [k]$, the product of $x$ and $c$ is defined by
$ x \cdot c= (x_1 \cdot c)(x_2 \cdot c)\cdots (x_n \cdot c).$ 
Given two strings $x, y \in [k]^n$, the sum of $x,y$, denoted by $x+y$,  is a $n$-length string  with the $i$th position being $ (x_i + y_i)\bmod k$ for $i = 1, 2, \cdots n$.

\begin{definition}
    Let $s$ be any secret string from $[k]^n$. $M$ is a $k \times n$  matrix with rows  indexed by colors in $[k]$  and columns indexed by positions in $\{1,2,\dots, n\}$, defined    by
\begin{equation*}
    M (c_i, j)= 
        \begin{cases}
            & 1, ~~~s_j = c_i , \\
            & 0, ~~~\text{otherwise}.
        \end{cases}
\end{equation*}
$M$ is called  the  characteristic matrix associated with $s$. \label{Def:matrix}
\end{definition}
It is obvious that identifying $s$ is equivalent to learning $M$. More specifically, it is easy to see $M$ has the following properties:

 \begin{itemize}
    \item[(a)]  Every column of $M$ contains only one  $1$ and all the other elements are zero. 
 \item[(b)]  The secret string $s$ can be represented as 
\begin{equation}
    s = \sum\limits_{c_i\in[k]} c_i \cdot M (c_i,*), \label{eq:s}
\end{equation}
where $M (c_i,*)$  denotes the $c_i$-th row  of $M$.
        \end{itemize}


Let $M^{(c_i,c_j)}$ be the sum of the $c_i$-th and $c_j$-th rows of $M$ given by \begin{align}M^{(c_i,c_j)} \equiv M (c_i, *) \oplus M (c_j, *)\label{eq:Mcc}\end{align} where $c_i\neq c_j\in [k]$ and $\oplus$ denotes the bitwise XOR.

The next lemma is an important property of the characteristic matrix $M$.
\begin{lemma} Suppose $M$ is the characteristic matrix of  $s \in [k]^n$.  Then we can determine $s$ from 
 the set $\{M^{(c_2,c_1)}, M^{(c_3,c_1)}, \dots, M^{(c_k,c_1)}\}$ where $c_i \neq c_j$ for  $i \neq j$.
\label{lemma:structure}\end{lemma}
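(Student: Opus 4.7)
The plan is to use the fact that the rows of $M$ have pairwise disjoint supports (property (a): each column of $M$ contains exactly one $1$), so for any two distinct colors $c_i, c_j$ the XOR in \eqref{eq:Mcc} coincides with the ordinary sum of the rows. Concretely, I would first record the following reformulation: for every position $j \in \{1, \dots, n\}$,
\begin{equation*}
    M^{(c_i,c_j)}(j) \;=\; \mathbf{1}[s_j = c_i] \;+\; \mathbf{1}[s_j = c_j],
\end{equation*}
so the $j$-th bit of $M^{(c_i,c_j)}$ equals $1$ if and only if $s_j \in \{c_i, c_j\}$. This reduces the whole lemma to an elementary bookkeeping argument over the $k-1$ strings available to us.

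Next, I would exploit this interpretation to recover $s_j$ position by position. Fix $j$ and look at the bits $M^{(c_2,c_1)}(j), M^{(c_3,c_1)}(j), \ldots, M^{(c_k,c_1)}(j)$. There are exactly two possibilities according to whether $s_j = c_1$ or not:
\begin{itemize}
    \item If $s_j = c_1$, then for every $i \in \{2, \ldots, k\}$ we have $s_j \in \{c_i, c_1\}$, so every $M^{(c_i,c_1)}(j) = 1$; the bit pattern across $i$ is the all-ones vector.
    \item If $s_j = c_{i^*}$ for some $i^* \in \{2, \ldots, k\}$, then $M^{(c_{i^*},c_1)}(j) = 1$ but $M^{(c_i,c_1)}(j) = 0$ for every other $i \in \{2, \ldots, k\}$ (since $s_j \notin \{c_i, c_1\}$ when $i \neq i^*$); the bit pattern has a unique $1$, located at index $i^*$.
\end{itemize}
These two patterns are disjoint and together exhaust the possibilities (color $c_{i^*}$ is read off as the unique position of a $1$), so $s_j$ is uniquely determined by the $k-1$ given strings.

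Finally, I would conclude by assembling the positions: since each $s_j$ is determined, the entire secret $s = s_1 s_2 \cdots s_n$ is determined, which together with \eqref{eq:s} gives the desired conclusion. There is no serious obstacle here; the only thing to be careful about is emphasizing that we use property (a) of $M$ in the very first step, since without the disjoint-support property the XOR $M^{(c_i,c_j)}$ would lose information and the above case analysis would fail. The argument is fully constructive and will be reused implicitly in the algorithmic sections that follow.
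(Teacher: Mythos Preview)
Your argument is correct and reaches the same conclusion, but it proceeds along a different route from the paper. The paper works \emph{row-wise} with Boolean algebra: it first derives the identity $M(c_i,*) = M^{(c_i,c_j)} \wedge M^{(c_i,c_l)}$ for any three distinct colors (using property (a)), recovers $M(c_1,*) = M^{(c_1,c_2)} \wedge M^{(c_1,c_3)}$ from two of the given strings, and then obtains every other row via $M(c_i,*) = M(c_1,*) \oplus M^{(c_i,c_1)}$, finally reading off $s$ from \eqref{eq:s}. Your proof instead works \emph{column-wise}, recovering $s_j$ directly from the bit pattern $(M^{(c_i,c_1)}(j))_{i=2}^{k}$ via a two-case analysis. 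Your route is more elementary and makes the information-theoretic sufficiency immediately transparent; the paper's route yields explicit closed-form Boolean expressions for the rows of $M$, which is exactly the form plugged into Algorithm~\ref{algorithm:nonadaptive2}. Both arguments implicitly require $k \geq 3$ (yours so that the all-ones pattern is distinguishable from the single-one pattern, the paper's so that two distinct strings are available for the AND), consistent with the standing hypothesis of this subsection. One small notational cleanup: you use $j$ simultaneously as a position index and as the color subscript in $c_j$; rename one of them to avoid the clash.
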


\begin{proof} First from Eq. \eqref{eq:Mcc}, we have
 \begin{align}
         M (c_i, *)= M (c_j, *) \oplus M^{(c_i,c_j)}, \label{MCMCMCC}
          \end{align}
    Next, we can show 
 \begin{align}
M (c_i, *) =  M^{(c_i,c_j)} \wedge M^{(c_i,c_l)}, \label{MMMMCC}
 \end{align}
 where $\wedge$ denotes the bitwise AND, and $\vee$ denotes the bitwise OR in the sequel. Actually, we have
    \begin{align*}
    M^{(c_i,c_j)} \wedge M^{(c_i,c_l)} &= (M (c_i,*) \oplus M (c_j,*)  ) \wedge  (M (c_i,*) \oplus M (c_l,*)  ) \\
    &= (M (c_i,*) \vee M (c_j,*)  ) \wedge  (M (c_i,*) \vee M (c_l,*)  ) \\
    &=  M (c_i,*) \vee (M (c_j,*) \wedge M (c_l,*) )\\
      &=  M (c_i,*) \vee \bf{0}\\
    &=M (c_i,*),
\end{align*}
where the second and fourth equations hold from the fact that two different rows of $M$ never have $1$ on the same position.

Then from Eqs. \eqref{MCMCMCC} and \eqref{MMMMCC},  we  have $$M (c_1, *) =  M^{(c_1,c_2)} \wedge M^{(c_1,c_3)},$$ and for $i=2,3,\dots,k$, there is 
\begin{align*}
    M (c_i, *)&= M (c_1, *) \oplus M^{(c_i,c_1)}\\
    &=(M^{(c_1,c_2)} \wedge M^{(c_1,c_3)}) \oplus M^{(c_i,c_1)}.
\end{align*}
It means that we can get $\{M (c_1, *), M (c_2, *), \dots, M (c_k, *)\}$ from $\{M^{(c_2,c_1)}, M^{(c_3,c_1)}, \dots, M^{(c_k,c_1)}\}$, and thus obtain $s$ according to Eq. \eqref{eq:s}.

\end{proof}
\begin{remark} It can be seen that the $i$-th entry of $M^{(c_l, c_h)}$ is 
        $$M^{(c_l, c_h)}_i= 
        \begin{cases}
            & 1, ~~~s_i = c_l ~\text{or}~  c_h, \\
            & 0, ~~~\text{otherwise},
        \end{cases}$$
      for $i\in\{1, 2, \cdots, n\}$,  which thus indicates the positions where $s$ takes the color  $c_l$ or $c_h$.
\end{remark}

Moreover, another  observation is the following lemma.

\begin{lemma} Suppose $M$ is the characteristic matrix of  $s \in [k]^n$. Then $M^{(c_g, c_l)}$ and $M^{(c_l, c_h)}$  with $c_g \neq c_l \neq c_h$ suffice to determine  the three rows of $M$: $M(c_g,*), M(c_l,*)$, and $ M(c_h,*)$. Furthermore, $s$ can be determined from  $\lceil \frac{k}{3}  \rceil$ pairs in the form of  $\{M^{(c_g, c_l)}, M^{(c_l, c_h)}\}$.\label{lemma:structure2}
\end{lemma}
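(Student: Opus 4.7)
The plan is to build directly on the identities already derived in the proof of Lemma \ref{lemma:structure}. Equation \eqref{MMMMCC} shows that $M(c_l,*)= M^{(c_l,c_g)} \wedge M^{(c_l,c_h)}$; since $\oplus$ is symmetric, $M^{(c_l,c_g)} = M^{(c_g,c_l)}$, so a single bitwise AND of the given pair $\{M^{(c_g,c_l)}, M^{(c_l,c_h)}\}$ produces $M(c_l,*)$ directly. The remaining two rows then fall out of Eq. \eqref{MCMCMCC}: $M(c_g,*) = M(c_l,*) \oplus M^{(c_g,c_l)}$ and $M(c_h,*) = M(c_l,*) \oplus M^{(c_l,c_h)}$. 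This establishes the first assertion.

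For the second assertion, the strategy is to choose $\lceil k/3 \rceil$ ordered triples $(c_g, c_l, c_h)$ of distinct colors whose union covers the entire color set $[k]$, then invoke the first assertion on each corresponding pair $\{M^{(c_g,c_l)}, M^{(c_l,c_h)}\}$ to obtain all $k$ rows of $M$, and finally reconstruct $s$ via Eq. \eqref{eq:s}. Concretely, writing $k = 3m+r$ with $r \in \{0,1,2\}$, I would use the $m$ disjoint triples $(3i,\,3i+1,\,3i+2)$ for $i=0,\dots,m-1$ to handle the first $3m$ colors; when $r\neq 0$, one additional triple whose entries are the remaining $r$ uncovered colors together with $3-r$ already-covered colors takes care of the rest. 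The total number of triples — and hence of pairs of the form $\{M^{(c_g,c_l)},M^{(c_l,c_h)}\}$ used — is exactly $m + \lceil r/3 \rceil = \lceil k/3 \rceil$.

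I do not expect any genuine obstacle here: the algebraic content is essentially recycled from Lemma \ref{lemma:structure}, and the combinatorial content is only a very mild covering design on $[k]$. The one thing to double-check is the distinctness condition $c_g \neq c_l \neq c_h$ in the leftover triple when $r \in \{1,2\}$, but this is automatic as soon as $k \geq 3$, which is the standing hypothesis of this section. Everything else is bookkeeping.
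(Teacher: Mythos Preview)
Your proposal is correct and follows essentially the same approach as the paper. The only cosmetic difference is that the paper establishes the first assertion via a four-row truth table on the bit pair $(M_i^{(c_g,c_l)},M_i^{(c_l,c_h)})$, whereas you express the same content algebraically by invoking Eqs.~\eqref{MMMMCC} and \eqref{MCMCMCC} from the proof of Lemma~\ref{lemma:structure}; the covering of $[k]$ by $\lceil k/3\rceil$ triples and the final appeal to Eq.~\eqref{eq:s} are identical.
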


\begin{proof}
Let $Tri=(c_g, c_l,c_h)$ with $c_g \neq c_l \neq c_h$. 
As shown in Table \ref{tableMM}, the value of $s_i$ can be inferred from $M_i^{(c_g,c_l)}$ and $M_i^{(c_l,c_h)}$ for $i=1,2,\dots,n$.  As a result, on can learn in which position $s$ takes the  color  $c_g$ or $c_l$ or $c_h$ from $M^{(c_g, c_l)}$ and $M^{(c_l, c_h)}$  with $c_g \neq c_l \neq c_h$. That is, $M(c_g,*), M(c_l,*)$, and $ M(c_h,*)$ can be determined.
\begin{table}[h]
\setlength{\belowcaptionskip}{10pt}
\caption{The value of $s_i$ based on $M_i^{(c_g,c_l)}$ and $M_i^{(c_l,c_h)}$ for $i=1,2,\dots,n$.}
    \centering
    \begin{tabular}{|c|c|c|}
       \hline   \textbf{$M_i^{(c_g,c_l)}$} & \textbf{$M_i^{(c_l,c_h)}$} & \textbf{$s_i$} \\  
\hline   \textbf{$1$} & \textbf{$1$} & \textbf{$s_i=c_l$} \\
\hline   \textbf{$1$} & \textbf{$0$} & \textbf{$s_i=c_g$} \\
\hline   \textbf{$0$} & \textbf{$1$} & \textbf{$s_i=c_h$} \\
\hline   \textbf{$0$} & \textbf{$0$} & \textbf{$s_i\not\in  Tri $} \\
\hline
    \end{tabular}
    \label{tableMM}
\end{table}

Following the above idea, one can divide the color set $[k]$ into   $\lceil \frac{k}{3}  \rceil$  triples in the form  of $(c_g, c_l,c_h)$ with $c_g \neq c_l \neq c_h$ such that these triples are as disjoint  as possible. For example, let $k=10$. Then we have the following triples: $(0, 1, 2)$, $(3, 4, 5)$, $(6, 7, 8)$, $(7, 8, 9)$, with only the last triple  having an intersection with the  one in front.  Furthermore, we have $\lceil \frac{k}{3}  \rceil$  pairs in the form of   $\{M^{(c_g, c_l)}, M^{(c_l, c_h)}\}$, each of which is associated with a triple $(c_g, c_l,c_h)$ and can be used to  determine $M(c_g,*), M(c_l,*)$, and $ M(c_h,*)$ as shown above. Therefore, All  rows of $M$ can be determined from  the $\lceil \frac{k}{3}  \rceil$  pairs. As a result, $s$ can be determined by Eq. \eqref{eq:s}.

\end{proof}

\begin{remark}
  One will see soon later that Lemma \ref{lemma:structure} leads to the non-adaptive quantum algorithm for Mastermind with $k-1$ black-peg queries, and Lemma \ref{lemma:structure2} results in the one with  at most $2\lceil \frac{k}{3}  \rceil$ queries.
\end{remark}

\subsubsection{Quantum procedure for learning $M^{(c_l,c_h)}$}
Here we  construct a quantum algorithm that takes any two different colors $c_l, c_h \in [k]$ as an input, and returns  $M^{(c_l,c_h)}$ with certainty. 

First, for the secret  string $s\in [k]^n$ and any  two colors $c_l, c_h \in [k]$ with $c_l \neq c_h $,  we define a function ${B_s^{(c_l, c_h)}}: [2]^n\rightarrow \{0,1,\cdots,n\}$  as 
   $${B_s^{(c_l, c_h)}}(x) = \sum_{i=1}^n (\delta_{x_i 0}\delta_{s_i c_l} + \delta_{x_i 1}\delta_{s_i c_h}),$$
 which indicates how many positions $i$ satisfy $s_i$ takes color $c_l$ when $x_i = 0$ or  $c_h$ when $x_i = 1$.  Its quantum oracle works as 
$$ {B_s^{(c_l, c_h)}} | x \rangle | b \rangle = | x \rangle | b \oplus_{2^m} {B_s^{(c_l, c_h)}}(x) \rangle, $$ 
where $m$ is required to satisfy $2 ^ m \geq (n+1).$

Now we obtain the following result.
\begin{lemma}\label{lemmaFind2colorposition}
There is a  quantum algorithm that returns $M^{(c_l, c_h)}$ with certainty    and consumes one query to the   oracle ${B_s^{(c_l, c_h)}}$.
\end{lemma}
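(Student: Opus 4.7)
The plan is to reduce the task of learning $M^{(c_l,c_h)}$ to a Bernstein--Vazirani-type problem by extracting the parity of $B_s^{(c_l,c_h)}(x)$ via a QFT-based phase kickback. The starting observation, to be verified first, is the algebraic identity that reveals $M^{(c_l,c_h)}$ as hidden inside $B_s^{(c_l,c_h)}$ modulo $2$. Writing $A=\{i:s_i=c_l\}$ and $B=\{i:s_i=c_h\}$, one has
\[
B_s^{(c_l,c_h)}(x)=\sum_{i\in A}(1-x_i)+\sum_{i\in B}x_i=|A|-\sum_{i\in A}x_i+\sum_{i\in B}x_i.
\]
Reducing modulo $2$ turns the minus sign into a plus and, since $A$ and $B$ are disjoint, merges the two sums into the inner product with the indicator of $A\cup B$, giving
\[
B_s^{(c_l,c_h)}(x)\equiv |A|+\langle x,M^{(c_l,c_h)}\rangle \pmod 2.
\]
So the parity of the oracle's output is an affine Boolean function whose linear part is exactly the vector we want to learn.

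With that identity in hand, I would build a single-query circuit that realises this parity as a phase on the input register. Prepare the input register in $H^{\otimes n}\ket{0}$ and the $m$-qubit ancilla in the state $QFT_{2^m}\ket{2^{m-1}}=\tfrac{1}{\sqrt{2^m}}\sum_{y}(-1)^y\ket{y}$. A short direct calculation shows that
\[
{B_s^{(c_l,c_h)}}\,\ket{x}\bigl(QFT_{2^m}\ket{2^{m-1}}\bigr)=(-1)^{B_s^{(c_l,c_h)}(x)}\,\ket{x}\bigl(QFT_{2^m}\ket{2^{m-1}}\bigr),
\]
so the ancilla is returned unchanged and a sign $(-1)^{B_s^{(c_l,c_h)}(x)}$ is kicked back onto $\ket{x}$. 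The hypothesis $2^m\ge n+1$ guarantees that the oracle's modular output is the true value of $B_s^{(c_l,c_h)}(x)$, so the sign that is kicked back is the genuine parity.

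Combining the two steps and using the identity from the first paragraph, the state of the input register after the oracle call is
\[
(-1)^{|A|}\,\frac{1}{\sqrt{2^n}}\sum_{x\in\{0,1\}^n}(-1)^{\langle x,M^{(c_l,c_h)}\rangle}\ket{x}=(-1)^{|A|}\,H^{\otimes n}\ket{M^{(c_l,c_h)}}.
\]
A final layer of Hadamards followed by a computational-basis measurement therefore returns $M^{(c_l,c_h)}$ with certainty, using a single invocation of $B_s^{(c_l,c_h)}$.

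The only point that needs a little care is the phase-kickback computation with \emph{modular} addition: when $y+B_s^{(c_l,c_h)}(x)$ wraps around, the exponent changes by $2^m$, and one must check that $(-1)^{y\oplus_{2^m}B_s^{(c_l,c_h)}(x)}=(-1)^{y}\,(-1)^{B_s^{(c_l,c_h)}(x)}$ nevertheless. This is what makes the choice of ancilla $\ket{2^{m-1}}$ (rather than some other $\ket{k}$) pivotal; the parity character is the unique nontrivial character of $\mathbb{Z}_{2^m}$ compatible with the wraparound, since $2^m$ is even. Beyond this subtlety the argument is routine Bernstein--Vazirani, and no further lemma is needed.
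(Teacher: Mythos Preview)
Your proof is correct and is essentially the paper's approach: the same circuit (Hadamards on input, the $(-1)^y$ ancilla, one oracle call, Hadamards, measure) and the same phase-kickback argument, including the same care about the modular wraparound. The only cosmetic difference is that you first isolate the parity identity $B_s^{(c_l,c_h)}(x)\equiv |A|+\langle x,M^{(c_l,c_h)}\rangle\pmod 2$ and then invoke Bernstein--Vazirani, whereas the paper reaches the same conclusion by a direct tensor-product calculation framed as ``$n$ synchronous Deutsch algorithms.''
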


\begin{proof}
      We construct explicitly the quantum algorithm called {\it FindTwoColorPosition} (see Algorithm \ref{algorithm:find2color}). It can be intuitively regarded as  $n$ synchronous executions of the Deutsch algorithm \cite{deutsch1985quantum,deutsch1992rapid}.

\begin{algorithm}[htb]    
    \SetKwFunction{FindTwoColorPosition}{FindTwoColorPosition}
    \SetKwInOut{KWProcedure}{Procedure}
    \SetKwInput{Runtime}{Runtime}
    \caption{FindTwoColorPosition}
    \label{algorithm:find2color}
    \LinesNumbered

    \KwIn {A quantum oracle ${B_s^{(c_l, c_h)}}$ with $s \in [k]^n$ and two different colors $c_l, c_h \in [k]$.}
    \KwOut {A string $M^{(c_l, c_h)} \in \{0, 1\}^n$ satisfying 
    $M^{(c_l, c_h)}_i= 
        \begin{cases}
            & 1, ~~~s_i = c_l ~\text{or}~  c_h, \\
            & 0, ~~~\text{otherwise}.
        \end{cases}$}
    \Runtime{One  query to  ${B_s^{(c_l, c_h)}}$. Succeeds with certainty.}
    \KWProcedure{}
    Prepare the initial state $\ket{\Phi_0}=| 0 \rangle ^ {\otimes n} \ket{0}^{\otimes m-1} | 1 \rangle \in (C^2)^{\otimes n} \otimes (C^2)^{\otimes m}$ with $2 ^ m \geq (n+1)$;

    Apply the unitary transformation $H ^{\otimes n} \otimes H ^{\otimes m} $ to $\ket{\Phi_0}$;
    
    Apply the oracle $B_s^{(c_l, c_h)}$;
    
    Apply the unitary transformation $H ^{\otimes n} \otimes H ^{\otimes m}$;
    
    Measure the first $n$ registers in the computational basis.

\end{algorithm}

    At the first step, we prepare the initial state $$\ket{\Phi_0}=| 0 \rangle ^ {\otimes n} \ket{0}^{\otimes m-1} | 1 \rangle \in (C^2)^{\otimes n} \otimes (C^2)^{\otimes m}.$$
    
    At the second step, applying the unitary operator $H ^{\otimes n} \otimes H ^{\otimes m} $ to $\ket{\Phi_0}$, we get
    \begin{equation*}
       \ket{\Phi_1}  = H ^{\otimes n} \otimes H ^{\otimes m} | \Phi_0 \rangle = \frac{1}{\sqrt{2^n}} \sum_{x = 0}^{2^n-1} | x \rangle \otimes \frac{1}{\sqrt{2^m}} \sum_{y = 0}^{2^m-1} (-1)^y | y \rangle.
    \end{equation*}
    
    At the third step, recall that ${B_s^{(c_l, c_h)}}(x) = \sum_{i=1}^n (\delta_{x_i 0}\delta_{s_i c_l} + \delta_{x_i 1}\delta_{s_i c_h})$ and ${B_s^{(c_l, c_h)}} | x \rangle | b \rangle = | x \rangle | b \oplus_{2^m} {B_s^{(c_l, c_h)}}(x) \rangle$ . Then  after applying the oracle $B_s^{(c_l, c_h)}$, we have 
    \begin{align}
      \ket{\Phi_2 }  
        & = B_s^{(c_l, c_h)} | \Phi_1 \rangle \nonumber\\
        & = \frac{1}{\sqrt{2^n}} \sum_{x = 0}^{2^n-1} \left(| x \rangle \otimes \frac{1}{\sqrt{2^m}} \sum_{y = 0}^{2^m-1} (-1)^y | y \oplus_{2^m} {B_s^{(c_l, c_h)}}(x) \rangle\right) \nonumber\\
        & = \frac{1}{\sqrt{2^n}} \sum_{x = 0}^{2^n-1} \left (| x \rangle \otimes \frac{1}{\sqrt{2^m}} \sum_{y = 0}^{2^m-1} (-1)^{y+B_s^{(c_l, c_h)}(x)-B_s^{(c_l, c_h)}(x)} | y \oplus_{2^m} {B_s^{(c_l, c_h)}}(x) \rangle \right) \nonumber\\
         & = \frac{1}{\sqrt{2^n}} \sum_{x = 0}^{2^n-1} \left((-1)^{{-B_s^{(c_l, c_h)}}(x)} | x \rangle \otimes \frac{1}{\sqrt{2^m}} \sum_{y = 0}^{2^m-1} (-1)^{y+B_s^{(c_l, c_h)}(x)} | y \oplus_{2^m} {B_s^{(c_l, c_h)}}(x) \rangle\right)   \label{eq-key1} \\
        & = \frac{1}{\sqrt{2^n}} \sum_{x = 0}^{2^n-1} (-1)^{{B_s^{(c_l, c_h)}}(x)} | x \rangle \otimes \frac{1}{\sqrt{2^m}} \sum_{y' = 0}^{2^m-1} (-1)^{y'} | y' \rangle  \label{eq-key2} \\
        & = \frac{1}{\sqrt{2^n}} \sum_{x = 0}^{2^n-1} (-1)^{\sum_{i=1}^n (\delta_{x_i 0}\delta_{s_i c_l} + \delta_{x_i 1}\delta_{s_i c_h})}| x \rangle \otimes \frac{1}{\sqrt{2^m}} \sum_{y' = 0}^{2^m-1} (-1)^{y'} | y' \rangle \nonumber \\
        \label{eq:bsc2} & = \frac{1}{\sqrt{2^n}} \bigotimes_{i = 1}^{n} [(-1)^{\delta_{s_i c_l}} | 0 \rangle + (-1)^{\delta_{s_i  c_h}} | 1 \rangle] \otimes \frac{1}{\sqrt{2^m}} \sum_{y' = 0}^{2^m-1} (-1)^{y'} | y' \rangle. \nonumber
    \end{align}
Note that in Eq. (\ref{eq-key1}), we have $(-1)^{y+B_s^{(c_l, c_h)}(x)}=(-1)^{y\oplus_{2^m}B_s^{(c_l, c_h)}(x)}$. Then by letting $y'=y\oplus_{2^m}B_s^{(c_l, c_h)}(x)$, we get Eq. (\ref{eq-key2}).
   
   At the fourth step, applying $H ^{\otimes n} \otimes H ^{\otimes m}$ to $\ket{\Phi_2 } $, we get
    \begin{align}
       \ket{\Phi_3 } 
        & = H ^{\otimes n} \otimes H ^{\otimes m} |\Phi_2 \rangle \\
        & = \bigotimes_{i = 1}^{n} (-1)^{\delta_{s_i c_l}} | \delta_{s_i  c_l} \oplus \delta_{s_i c_h} \rangle \otimes \ket{0}^{\otimes m-1} | 1 \rangle \\
        & = \bigotimes_{i = 1}^{n} (-1)^{\delta_{s_i c_l}} | \delta_{s_i  c_1} \vee \delta_{s_i c_h} \rangle \otimes \ket{0}^{\otimes m-1} | 1 \rangle \label{oplus2vee}
    \end{align}
    where  Eq. \eqref{oplus2vee} holds because $\delta_{s_i  c_l}$ and $\delta_{s_i c_h}$ never be both $1$. 
    
    Finally, by measuring the first $n$ registers,  the algorithm outputs with certainty the string $x$ satisfying $x_i = 1$ if $s_i = c_l$ or $s_i = c_h$, and $x_i = 0$ otherwise.

Note that the algorithm uses one query to  ${B_s^{(c_l, c_h)}}$. 
\end{proof}

\subsubsection{Conversion between two oracles}
Currently we are only the last step away from obtaining a quantum algorithm for Mastermind: showing how to construct ${B_s^{(c_l, c_h)}}$ from the black-peg  oracle $B_s$. Now we are going to finish it.

\begin{lemma}\label{lemmaBsc}
Given $s\in [k]^n$, and two colors $c_l, c_h \in [k]$ with $ c_l \neq c_h$, ${B_s^{(c_l, c_h)}}$ can be constructed by using one  black-peg  oracle ${B_s}$.
\end{lemma}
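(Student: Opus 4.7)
The plan is to exhibit an explicit reduction that turns a query with a binary string $x \in \{0,1\}^n$ into a query with a string from $[k]^n$, in a way that makes the two counts coincide, and then to lift this classical reduction to a quantum circuit via standard compute--uncompute.

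Step 1 (classical reduction). Define a map $T_{c_l, c_h} : \{0,1\}^n \to [k]^n$ by
\begin{equation*}
    T_{c_l,c_h}(x)_i = \begin{cases} c_l, & x_i = 0, \\ c_h, & x_i = 1, \end{cases} \qquad i = 1,\dots,n.
\end{equation*}
Writing $y = T_{c_l,c_h}(x)$, we have $s_i = y_i$ if and only if either ($x_i = 0$ and $s_i = c_l$) or ($x_i = 1$ and $s_i = c_h$), since $c_l \neq c_h$. Hence
\begin{equation*}
    B_s(y) = \bigl|\{i : s_i = y_i\}\bigr| = \sum_{i=1}^n \bigl(\delta_{x_i 0}\delta_{s_i c_l} + \delta_{x_i 1}\delta_{s_i c_h}\bigr) = B_s^{(c_l,c_h)}(x).
\end{equation*}
This is the key identity that lets one black-peg query emulate one $B_s^{(c_l,c_h)}$ query.

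Step 2 (quantum implementation). I will realise the oracle $B_s^{(c_l,c_h)} : |x\rangle|b\rangle \mapsto |x\rangle|b \oplus_{2^m} B_s^{(c_l,c_h)}(x)\rangle$ (with $x \in \{0,1\}^n$ and $b \in [2^m]$) by the following circuit on registers $X$ (the $n$-qubit input), $B$ (the $m$-qubit answer), and an ancilla register $Y$ that holds an element of $[k]^n$ (so $n\lceil \log_2 k\rceil$ qubits), initialised to $|0\rangle$:
\begin{enumerate}
    \item \textbf{Compute $y$:} For each position $i$, write $T_{c_l,c_h}(x_i)$ into the $i$-th block of $Y$. Because the map depends only on the single bit $x_i$ and on fixed constants $c_l, c_h$, this is implemented by at most $\lceil \log_2 k\rceil$ CNOTs together with $X$-gates that hard-code the two constants; no queries to $B_s$ are used.
    \item \textbf{Query:} Apply the black-peg oracle $B_s$ on the input register $Y$ and answer register $B$, yielding $|x\rangle|y\rangle|b \oplus_{2^m} B_s(y)\rangle$. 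Here I use exactly one call to $B_s$. (I will note that the black-peg oracle's answer lies in $\{0,1,\dots,n\} \subseteq [2^m]$ by the assumption $2^m \ge n+1$, so the $\oplus_{2^m}$ convention is consistent.)
    \item \textbf{Uncompute $y$:} Apply the inverse of step~1 to restore $Y$ to $|0\rangle$.
\end{enumerate}
By the identity from Step~1, the final state is $|x\rangle|0\rangle|b \oplus_{2^m} B_s^{(c_l,c_h)}(x)\rangle$, which is precisely the action of $B_s^{(c_l,c_h)}$ (with the ancilla discarded). The total cost is a single query to $B_s$ plus $O(n\log k)$ ancilla and gates.

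There is no real obstacle beyond verifying these two routine ingredients; the only mildly delicate point is ensuring the ancilla is coherently uncomputed so that the mapping is unitary on the $(X,B)$ registers for superpositions of inputs $x$, which is exactly what the compute--query--uncompute pattern guarantees. Combining this lemma with Lemmas~\ref{lemma:structure}, \ref{lemma:structure2} and \ref{lemmaFind2colorposition} will then yield the two non-adaptive algorithms with $k-1$ and $2\lceil k/3\rceil$ black-peg queries, respectively.
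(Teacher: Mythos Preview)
Your proof is correct and takes essentially the same approach as the paper: both define the identical encoding map $x\mapsto y$ with $y_i=c_l$ if $x_i=0$ and $y_i=c_h$ if $x_i=1$, and verify the key identity $B_s(y)=B_s^{(c_l,c_h)}(x)$. Your Step~2 spells out in words the compute--query--uncompute pattern that the paper conveys via a circuit diagram, so the only difference is expository.
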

\begin{proof}
Given a secret string $s = s_1s_2 \dots s_n \in [k]^n$ and $x = x_1x_2 \dots x_n \in [2]^n$, we now describe how to compute ${B_s^{(c_l, c_h)}}(x) = \sum_{i=1}^n (\delta_{x_i 0}\delta_{s_i c_l} + \delta_{x_i 1}\delta_{s_i c_h})$ by using  $B_s$.  The quantum circuit diagram implementing  ${B_s^{(c_l, c_h)}}$  is shown in Figure \ref{fig:lemma6}.
\begin{figure}[htbp]  
\centering\includegraphics[width=13cm]{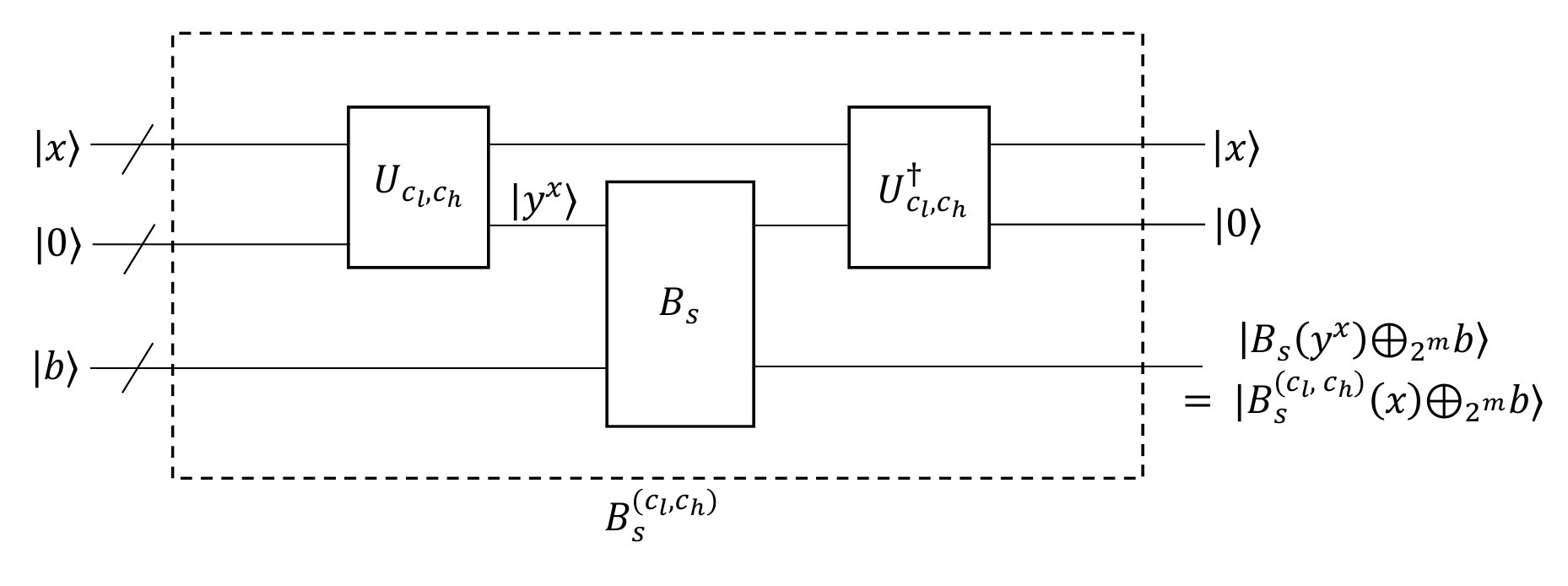} 
\caption{The quantum circuit diagram implementing ${B_s^{(c_l, c_h)}}$  }       
\label{fig:lemma6}   
\end{figure}
First we define a  string $y^x \in [k]^n$ as 
    \begin{equation*}
        y^x_i = 
        \begin{cases}
            & c_l, ~~~if ~   x_i=0, \\
            & c_h, ~~~if~ x_i=1.
        \end{cases}
    \end{equation*}
Let $V = \{ i \in \{1, 2, \dots, n\}: x_i = 0\}$.   
Feeding the black-peg function $B_s$ with $y^x$,  we get
    \begin{align*}
        B_s(y^x) 
        & = |\{ i \in V: s_i = c_l \} \cup \{ i \in \{1, 2, \cdots n \} - V: s_i = c_h \} | \\
        & = \sum_{i=1}^n (\delta_{x_i 0}\delta_{s_i c_l} + \delta_{x_i 1}\delta_{s_i c_h}) \\
        & = {B_s^{(c_l, c_h)}}(x) 
    \end{align*}
As a result,  ${B_s^{(c_l, c_h)}}(x) $ can be computed by calling the black-peg function $B_s$ once.
\end{proof}

\subsubsection{Final quantum algorithm for Mastermind}

Now we have two choices to construct a non-adaptive quantum algorithm for Mastermind.  By combining  Lemmas  \ref{lemma:structure}, \ref{lemmaFind2colorposition} and \ref{lemmaBsc}, we obtain a quantum algorithm with $k-1$ black-peg queries, and  by combining Lemmas  \ref{lemma:structure2}, \ref{lemmaFind2colorposition} and \ref{lemmaBsc}, we obtain the one with  at most $2\lceil \frac{k}{3}  \rceil$ queries.

\begin{theorem}\label{theorem:Nonadapt_OK}
There exist two non-adaptive quantum algorithms for the  Mastermind game with $n$ positions and $ k \geq 3$ colors returning the secret string with certainty:  one uses $k-1$ black-peg queries and the other consumes at most $2\lceil \frac{k}{3}  \rceil$ black-peg queries.
\end{theorem}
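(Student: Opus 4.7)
The plan is to assemble Lemmas \ref{lemma:structure}, \ref{lemma:structure2}, \ref{lemmaFind2colorposition} and \ref{lemmaBsc} into two concrete non-adaptive protocols, with the non-adaptivity coming from the fact that every query is fully determined by a pre-chosen pair of colors and does not depend on any measurement outcome.

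For the first algorithm, I would fix an arbitrary color $c_1 \in [k]$ and, for each $i = 2, 3, \ldots, k$, run the subroutine \textsc{FindTwoColorPosition} of Lemma \ref{lemmaFind2colorposition} on the pair $(c_i, c_1)$ in parallel. Each such subroutine consumes exactly one query to the oracle $B_s^{(c_i, c_1)}$, and by Lemma \ref{lemmaBsc} each such call can be implemented by a single query to $B_s$. Hence the total number of black-peg queries is $k-1$, and because every subroutine is exact, we obtain the full set $\{M^{(c_2, c_1)}, \ldots, M^{(c_k, c_1)}\}$ with certainty. Lemma \ref{lemma:structure} then reconstructs $s$ deterministically through purely classical post-processing.

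For the second algorithm, I would partition $[k]$ into $\lceil k/3 \rceil$ color triples $(c_g, c_l, c_h)$ with pairwise distinct entries, following the construction in the proof of Lemma \ref{lemma:structure2} (allowing the last triple to overlap with its predecessor when $k$ is not a multiple of $3$). For each triple I would run two parallel instances of \textsc{FindTwoColorPosition}, one for the pair $(c_g, c_l)$ and one for $(c_l, c_h)$, again converting each into a single black-peg query via Lemma \ref{lemmaBsc}. This yields $M^{(c_g, c_l)}$ and $M^{(c_l, c_h)}$ with certainty, from which Lemma \ref{lemma:structure2} recovers the three rows $M(c_g, *), M(c_l, *), M(c_h, *)$. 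Repeating over all triples produces every row of $M$ and hence $s$ via Eq.~\eqref{eq:s}, at a cost of at most $2\lceil k/3 \rceil$ queries.

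The argument is essentially a bookkeeping assembly, so I do not expect any deep obstacle. The only point that requires a moment of care is confirming that the schedule of calls is genuinely non-adaptive: since each subroutine's circuit depends only on the fixed color labels it acts on, all queries can be prepared and submitted simultaneously without any feedback from other branches. Combined with the certainty guarantees of Lemmas \ref{lemmaFind2colorposition} and \ref{lemmaBsc} and the deterministic classical recovery afforded by Lemmas \ref{lemma:structure} and \ref{lemma:structure2}, this delivers both bounds in the theorem.
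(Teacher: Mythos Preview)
Your proposal is correct and follows essentially the same approach as the paper: both algorithms are obtained by combining Lemmas \ref{lemma:structure} (resp.\ \ref{lemma:structure2}), \ref{lemmaFind2colorposition}, and \ref{lemmaBsc} exactly as you describe, with the first algorithm fixing a base color $c_1$ and looping over the remaining $k-1$ colors and the second partitioning $[k]$ into $\lceil k/3\rceil$ triples. Your explicit remark about non-adaptivity (that each subroutine's circuit depends only on its fixed color labels) is a useful addition that the paper leaves implicit.
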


\begin{proof}
\begin{algorithm}[htp]
    \SetKwFunction{FindTwoColorPosition}{FindTwoColorPosition}
    \caption{A non-adaptive quantum algorithm for Mastermind  with $n$ positions and $ k \geq 3$ colors using $k-1$ black-peg queries}
    \label{algorithm:nonadaptive2}
    \LinesNumbered
    \SetKwInOut{KWProcedure}{Procedure}
    \SetKwInput{Runtime}{Runtime}
    \KwIn {A black-peg oracle $B_s$ for $s\in [k]^n$ such that $B_s| x \rangle |b\rangle = |x\rangle |b\oplus_{2^m} B_s(x)\rangle$ with $2^m\geq n+1$.}
    \KwOut {The secret string $s$.}
    \Runtime{$k-1$  queries to  $B_s$. Succeeds with certainty.}
    \KWProcedure{}
    \For{i = 2 to k}{
        Construct $B_s^{(c_1, c_i)}$ from $B_s$ and the  pair $( c_1 , c_i )$; ~~~//Lemma \ref{lemmaFind2colorposition} 
        
        Call  \textbf{FindTwoColorPosition}  with $B_s^{(c_1, c_i)}$  as input to get $M^{(c_1, c_i)}$; ~~~//Lemma \ref{lemmaBsc} 
    }
    Set  $M (c_1, *) =  M^{(c_1,c_2)} \wedge M^{(c_1,c_3)}$;  ~~~// Here $ k \geq 3$ is required\;
    \For{j = 2 to k }{
        Set $M (c_j, *)= M (c_1, *) \oplus M^{(c_1,c_j)} $;  ~~~~//Lemma \ref{lemma:structure}
    
    }
    Ouput the secret string $ s = \sum\limits_{i=1}^{k} c_i \cdot M (c_i,*) $. 
\end{algorithm}

\begin{algorithm}[htp]
    \SetKwFunction{FindTwoColorPosition}{FindTwoColorPosition}
    \caption{A non-adaptive quantum algorithm for Mastermind  with $n$ positions and $ k \geq 3$ colors using at most $2\lceil \frac{k}{3}  \rceil$ black-peg queries.} \label{algorithm:nonadaptive333}
    \LinesNumbered
    \SetKwInOut{KWProcedure}{Procedure}
    \SetKwInput{Runtime}{Runtime}
    \KwIn {A black-peg oracle $B_s$ for $s\in [k]^n$ such that $B_s| x \rangle |b\rangle = |x\rangle |b\oplus_{2^m} B_s(x)\rangle$ with $2^m\geq n+1$.}
    \KwOut {The secret string $s$.}
    \Runtime{ At most $2\lceil \frac{k}{3}  \rceil$  queries to  $B_s$. Succeeds with certainty.}
    \KWProcedure{}
Let $T=\{(c_1, c_2, c_3), (c_4, c_5, c_6), \cdots, (c_{k-2}, c_{k-1}, c_k) \}$ where all the triples except the last one are disjoint from each other and last one may has an overlap with the  one in front.
    
    \For{each triple $(c_g, c_l, c_h)\in T$}{

        Construct $B_s^{(c_g, c_l)}$ and  $B_s^{(c_l, c_h)}$from $B_s$ and the  triple $(c_g, c_l, c_h)$; ~~~//Lemma \ref{lemmaFind2colorposition} 
        
        Call  \textbf{FindTwoColorPosition}  with $B_s^{(c_g, c_l)}$  as input to get  $M^{(c_g, c_l)}$; 
        
        Call  \textbf{FindTwoColorPosition}  with $B_s^{(c_l, c_h)}$  as input to get  $M^{(c_l, c_h)}$; ~~~//Lemma \ref{lemmaBsc} 
    
        Get $M(c_g, *), M(c_l, *),  M(c_g, *)$ from  $M^{(c_g, c_l)}$ and $M^{(c_g, c_l)}$;  //Lemma \ref{lemma:structure2}
    
    }
    
    Ouput the secret string $ s = \sum\limits_{i=1}^{k} c_i \cdot M (c_i,*). $ 
\end{algorithm}

  As shown in Algorithm \ref{algorithm:nonadaptive2}, the  idea is to first apply the subroutine \textbf{FindTwoColorPositon} to  get the result  $M^{(c_1, c_i)}$ for $i\in \{2,3,\dots, k\}$ ,  and then obtain all the rows of the characteristic matrix $M$, from which the  secret string $s$ can be deduced.  The correctness of  Algorithm \ref{algorithm:nonadaptive2} is ensured by Lemmas  \ref{lemma:structure}, \ref{lemmaFind2colorposition} and \ref{lemmaBsc}.
 Since  there are $k - 1$ calls to \textbf{FindTwoColorPosition} and each call consumes one  black-peg query to $B_s$ by Lemma \ref{lemmaFind2colorposition},   the complexity of Algorithm \ref{algorithm:nonadaptive2} with respective to $B_s$  is   $k-1$.

 The correctness of  Algorithm \ref{algorithm:nonadaptive333} is ensured by Lemmas  \ref{lemma:structure2}, \ref{lemmaFind2colorposition} and \ref{lemmaBsc}.   \label{algorithm:nonadaptive3}
Note that $|T|=\lceil \frac{k}{3}  \rceil$. Each pair $\{M^{(c_g, c_l)}, M^{(c_l, c_h)}\}$ associated with    $(c_g, c_l, c_h)\in T$  except the last one $(c_{k-2}, c_{k-1}, c_k)$ consumes two queries to $B_s$, and the last one may require only one query, since it is likely to  overlap with the one in front.
Therefore, the algorithm consumes at most  $2\lceil \frac{k}{3}  \rceil$ queries to $B_s$.

\end{proof}

\subsection{Non-adaptive Quantum Algorithm for $k = 2$} \label{sec: non-adaptive-special}
Inspired by the work of \cite{hunziker2002quantum}, we construct a quantum algorithm using only one black-peg query for $k = 2$.
\begin{algorithm}[htb]    
    \SetKwFunction{}{}
    \SetKwInOut{KWProcedure}{Procedure}
    \SetKwInput{Runtime}{Runtime}
    \caption{A non-adaptive quantum algorithm for Mastermind with $n$ positions and $2$ colors}
    \label{algorithm:nonadapt1}
    \LinesNumbered

    \KwIn {A black-peg oracle $B_s$ for $s\in [2]^n$ such that $B_s| x \rangle |b\rangle = |x\rangle |b\oplus_{2^m} B_s(x)\rangle$ with $2^m\geq n+1$.}
    \KwOut {The secret string $s$.}
    \Runtime{One  query to  $B_s$. Succeeds with certainty.}
    \KWProcedure{}
    
    Prepare the initial state $\ket{\Phi_0}=| 0 \rangle ^ {\otimes n} \ket{0}^{\otimes m-2} |0 1 \rangle \in (C^2)^{\otimes n} \otimes (C^2)^{\otimes m}$ with $2 ^ m \geq (n+1)$ and $m\geq 2$;

    Apply the unitary transformation $H ^{\otimes n} \otimes H ^{\otimes m - 2} \otimes QFT_4 $ to $\ket{\Phi_0}$.
    
    Call the black-peg oracle $B_s$ once.
    
    Apply the unitary transformation $ U^{\otimes n} \otimes H ^{\otimes m - 2} \otimes QFT_4^{\dagger}$, where 
    $U = \frac{1}{\sqrt{2}}
    \begin{bmatrix}
    i & 1 \\
    1 & i \\
    \end{bmatrix}.$
    
    Measure the first $n$ registers in the computational basis.

\end{algorithm}

\begin{theorem}
\label{theoremNonk2n}
There is a non-adaptive quantum algorithm for the Mastermind game with $n$ positions and $2$ colors  that uses one black-peg query  and returns the secret string with certainty.
\end{theorem}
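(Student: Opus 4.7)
The plan is to verify the algorithm by tracking the state through all four unitary steps, showing that the first $n$ registers end in the computational basis state $|s\rangle$, so the measurement returns $s$ with certainty using exactly one query to $B_s$. The structure closely mirrors that of Algorithm \ref{algorithm:find2color} in Lemma \ref{lemmaFind2colorposition}, but with the phase encoding done in the $4$-th roots of unity instead of $\{\pm 1\}$, which is why the last two qubits of the ancilla are handled by $QFT_4$ and the matching basis change on the first register uses the custom unitary $U$.

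First I would compute the state after step 2. Interpreting the ancilla $|01\rangle$ as $|1\rangle$ in $\mathbb{C}^4$, we have $QFT_4|1\rangle=\tfrac{1}{2}\sum_{j=0}^{3} i^{j}|j\rangle$, while $H^{\otimes(m-2)}|0\rangle^{\otimes(m-2)}$ and $H^{\otimes n}|0\rangle^{\otimes n}$ yield uniform superpositions. Regrouping the ancilla as a single $m$-qubit register with integer label $b=4z+j$, the phase in front of $|b\rangle$ is $i^{j}=i^{b\bmod 4}=i^{b}$, the last equality because $i^{4}=1$ and $4\mid 2^{m}$ (which uses the hypothesis $m\ge 2$). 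Hence after step 2 the overall state equals
\begin{equation*}
\frac{1}{\sqrt{2^{n}}}\sum_{x\in[2]^{n}}|x\rangle\otimes\frac{1}{\sqrt{2^{m}}}\sum_{b=0}^{2^{m}-1}\omega^{b}|b\rangle,\qquad \omega:=e^{2\pi i/4}=i.
\end{equation*}

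Next, applying $B_s$ and reindexing $b'=b\oplus_{2^{m}}B_s(x)$, the ancilla superposition is invariant up to the global factor $\omega^{-B_s(x)}$ (again using $4\mid 2^{m}$ so that the exponent only depends on $b\bmod 4$). Thus the first $n$ registers carry the phase $i^{-B_s(x)}=\prod_{i=1}^{n} i^{-\delta_{s_i x_i}}$, and since $B_s(x)=\sum_i\delta_{s_ix_i}$ is a sum of single-qubit functions the first register factorizes as
\begin{equation*}
\frac{1}{\sqrt{2^{n}}}\bigotimes_{i=1}^{n}\bigl(i^{-\delta_{s_i 0}}|0\rangle+i^{-\delta_{s_i 1}}|1\rangle\bigr).
\end{equation*}
When $s_i=0$ the $i$-th factor is $\tfrac{1}{\sqrt 2}(-i|0\rangle+|1\rangle)$, and when $s_i=1$ it is $\tfrac{1}{\sqrt 2}(|0\rangle-i|1\rangle)$.

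Then I would verify by direct $2\times 2$ computation that $U\cdot\tfrac{1}{\sqrt 2}(-i|0\rangle+|1\rangle)=|0\rangle$ and $U\cdot\tfrac{1}{\sqrt 2}(|0\rangle-i|1\rangle)=|1\rangle$. Consequently $U^{\otimes n}$ turns the first register into the computational basis state $|s_1 s_2\cdots s_n\rangle=|s\rangle$, while $H^{\otimes(m-2)}\otimes QFT_4^{\dagger}$ restores the ancilla to its starting form (immaterial to the output). Measuring the first $n$ registers therefore yields $s$ with probability $1$, and only the single invocation of $B_s$ in step~3 is used.

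The main obstacle is the bookkeeping in step~2--3, namely establishing that the partial $QFT_4$ on two qubits, combined with Hadamards on the remaining $m-2$ qubits, really produces an eigenvector-like state for modular addition on the full $m$-bit register so that $B_s$ acts as a clean phase $\omega^{-B_s(x)}$. The key point is the divisibility $4\mid 2^{m}$, which collapses $b \bmod 2^{m}$ to $b\bmod 4$ in the exponent and justifies the phase-kickback computation; once this is in place, the rest of the argument is a direct qubitwise verification analogous to Lemma~\ref{lemmaFind2colorposition}.
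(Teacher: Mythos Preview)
Your proof is correct and follows essentially the same line as the paper's: compute the ancilla as the $i^{y}$ eigenvector state, use phase kickback (via the $4\mid 2^{m}$ periodicity) to load $(-i)^{B_s(x)}$ onto the first register, factorize across positions, and apply $U^{\otimes n}$ to reach $|s\rangle$. You are a bit more explicit than the paper in justifying the $i^{y+B_s(x)}=i^{y\oplus_{2^m}B_s(x)}$ step and in verifying the action of $U$ on the two single-qubit states, which is fine.
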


\begin{proof}
The non-adaptive algorithm is presented in Algorithm \ref{algorithm:nonadapt1}.

At the first step, prepare the initial state
\begin{align*}
\ket{\Phi_0}=| 0 \rangle ^ {\otimes n} \ket{0}^{\otimes m-2} |0 1 \rangle \in (C^2)^{\otimes n} \otimes (C^2)^{\otimes m}.
\end{align*}

At the second step, apply the unitary operator $H ^{\otimes n} \otimes H ^{\otimes m - 2} \otimes QFT_4 $ to $\ket{\Phi_0}$.  We get
\begin{align*}
    | \Phi_1 \rangle = H ^{\otimes n} \otimes H ^{\otimes m - 2} \otimes QFT_4 | \Phi_0 \rangle = \frac{1}{\sqrt{2^n}} \sum_{x = 0}^{2^n-1}| x \rangle \otimes \frac{1}{\sqrt{2^m}} \sum_{y=0}^{2^m-1} (i)^y | y \rangle.
\end{align*}

At the third step, call the $B_s$ oracle. By noting that $B_s(x) = \sum_{j=1}^n \delta_{s_j x_j} $, we have 
\begin{align}
    | \Phi_2 \rangle 
    & = B_s | \Phi_1 \rangle \nonumber\\
    & = \frac{1}{\sqrt{2^n}} \sum_{x = 0}^{2^n-1} \left(| x \rangle \otimes \frac{1}{\sqrt{2^m}} \sum_{y=0}^{2^m-1} (i)^y | y \oplus_{2^m} B_s(x) \rangle\right) \nonumber\\
    & = \frac{1}{\sqrt{2^n}} \sum_{x = 0}^{2^n-1} \left(| x \rangle \otimes \frac{1}{\sqrt{2^m}} \sum_{y=0}^{2^m-1} (i)^{y + B_s(x) - B_s(x)} | y \oplus_{2^m} B_s(x) \rangle\right) \nonumber\\
    & = \frac{1}{\sqrt{2^n}} \sum_{x = 0}^{2^n-1} \left((i)^{-B_s(x)}| x \rangle \otimes \frac{1}{\sqrt{2^m}} \sum_{y=0}^{2^m-1} (i)^{y + B_s(x)} | y \oplus_{2^m} B_s(x) \rangle\right) \label{eq-key1a}  \\
    & = \frac{1}{\sqrt{2^n}} \sum_{x = 0}^{2^n-1} (-i)^{\sum_{j=1}^n \delta_{s_j x_j}}| x \rangle \otimes \frac{1}{\sqrt{2^m}} \sum_{y=0}^{2^m-1} (i)^{y'} | y' \rangle \label{eq-key2a} \\
    & = \frac{1}{\sqrt{2^n}} \bigotimes_{j = 1}^n ((-i)^{\delta_{s_j 0}} | 0 \rangle + (-i)^{\delta_{s_j 1}} | 1 \rangle ) \otimes \frac{1}{\sqrt{2^m}} \sum_{y=0}^{2^m-1} (i)^{y'} | y' \rangle. \nonumber
\end{align}
Note that in Eq. (\ref{eq-key1a}), we have $(i)^{y+B_s(x)}=(i)^{y\oplus_{2^m}B_s(x)}$\footnote{Let $f(a)=i^a$. Then $4$ is the minimal positive period of $f$. In addition,  it is required in Algorithm \ref{algorithm:nonadapt1} that $m\geq 2$. Thus, it is easy to get  $(i)^{y+B_s(x)}=(i)^{y\oplus_{2^m}B_s(x)}$. }. Then by letting $y'=y\oplus_{2^m}B_s(x)$, we get Eq. (\ref{eq-key2a}).

At the fourth step, after applying the unitary operator $ U^{\otimes n} \otimes H ^{\otimes m - 2} \otimes QFT_4^{\dagger}$, we have
    \begin{align*}
        | \Phi_3 \rangle 
        & = U^{\otimes n} \otimes H ^{\otimes m - 2} \otimes QFT_4^{\dagger} | \Phi_2 \rangle \\
        & = | s_1 s_2 \cdots s_n \rangle \otimes \ket{0}^{\otimes m-1} | 1 \rangle.
    \end{align*}
    
Finally, the secret string $s=s_1s_2\dots s_n$ can be obtained with certainty after measuring the  first $n$ registers in the computational basis.

\end{proof}

\section{Adaptive Quantum Algorithm}

In this section, we discuss  adaptive quantum algorithms for Mastermind. When only  black-peg queries are allowed, we construct a $O(\sqrt{k})$-complexity quantum algorithm in 
Sec. \ref{subsect:adaptive1} and prove the optimality of the algorithm in Sec. \ref{subsect:adaptive2}. When black-white-peg queries are allowed, we present a quantum algorithm with $O(\lceil \frac{k}{n}  \rceil + \sqrt{|C_s|})$ queries in Sec. \ref{subsect:adaptive3}, which is more efficient than the one with black-queries when $n \leq k \leq n^2$ .

\subsection{Adaptive Algorithm with Black-peg Queries}\label{subsect:adaptive1}

Here we will present an  adaptive quantum algorithm  with $\Omega(\sqrt{k})$ black-peg queries.

\begin{theorem}
\label{theorem:adaptive}
There is an adaptive quantum algorithm for the Mastermind game with $n$ positions and $k$ colors that uses $O(\sqrt{k})$ black-peg queries and returns the secret string with certainty.
\end{theorem}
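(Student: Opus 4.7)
The plan is to run $n$ exact Grover searches in parallel, one for each position, driven by a single black-peg query per iteration. The key observation is a phase-kickback identity: if the work register $x \in [k]^n$ is entangled with an ancilla prepared in $\frac{1}{\sqrt{2^m}}\sum_{y=0}^{2^m-1}(-1)^y |y\rangle$ with $2^m \geq n+1$, then one call to $B_s$ realises the diagonal operator $O_s : |x\rangle \mapsto (-1)^{B_s(x)}|x\rangle$. Since $B_s(x) = \sum_{i=1}^n \delta_{x_i s_i}$, this phase factorises as $\prod_{i=1}^n (-1)^{\delta_{x_i s_i}}$, so $O_s$ is exactly the tensor product over positions of the standard single-register Grover marking oracle for the target color $s_i$. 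The verification is essentially the one already carried out in the proof of Lemma \ref{lemmaFind2colorposition}.

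Next I would define the parallel Grover iterate $G = D^{\otimes n} \cdot O_s$, where $D = 2|\phi_k\rangle\langle\phi_k| - I$ acts on a single $k$-dimensional register with $|\phi_k\rangle = k^{-1/2}\sum_{c \in [k]}|c\rangle$. Starting from the product state $|\phi_k\rangle^{\otimes n}$ and applying $G$ preserves the product structure, and on each register $G$ acts literally as a standard Grover iterate on $k$ elements with exactly one solution. Consequently, after $t$ iterations the amplitude on $|s_i\rangle$ in register $i$ equals $\sin\bigl((2t+1)\arcsin(1/\sqrt{k})\bigr)$, the same value for every $i$ by symmetry. Choosing $t = \lfloor \pi\sqrt{k}/4\rfloor = O(\sqrt{k})$ brings every register near its target simultaneously, so a computational-basis measurement on the work registers returns $s$ with probability close to $1$ using $O(\sqrt{k})$ queries to $B_s$.

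To promote this bounded-error algorithm to an exact one, I would apply the Brassard--H\o yer--Mosca--Tapp / Long / H\o yer recipe for exact amplitude amplification with known initial success probability $1/k$. The main obstacle, explicitly flagged by the authors, is that the exact recipe requires a final iterate in which the marking oracle imparts a tuned complex phase $e^{i\alpha}$ on each matching position, rather than the fixed phase $-1$. To synthesise such a generalised oracle from $B_s$ while preserving the per-position product structure, I would use two queries to $B_s$ to coherently compute and then uncompute $B_s(x)$ into an ancilla, and in between apply the single-register phase gate $|z\rangle \mapsto e^{i\alpha z}|z\rangle$; the net effect is the diagonal gate $|x\rangle \mapsto e^{i\alpha B_s(x)}|x\rangle = \prod_i e^{i\alpha \delta_{x_i s_i}}|x\rangle$, which is the required tensor product of tuned per-position markings for any angle $\alpha$. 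Combining this exact endpoint with the $O(\sqrt{k})$ iterations of the main amplification phase yields an exact algorithm whose total query cost remains $O(\sqrt{k})$, establishing the theorem.
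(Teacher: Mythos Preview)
Your proposal is correct and follows essentially the same approach as the paper: factor the black-peg phase $e^{i\alpha B_s(x)}=\prod_i e^{i\alpha\delta_{x_is_i}}$ so that one (generalised) marking step drives $n$ independent Grover searches with identical success probability $1/k$, then invoke exact amplitude amplification. The paper packages the generalised phase oracle exactly as you describe, writing $O_s(\phi)=B_s^{\dagger}(I\otimes D(\phi))B_s$ with $D(\phi)=\sum_j e^{ij\phi}|j\rangle\langle j|$ and using Long's parameter choice for $\phi$ in every iterate, whereas you phrase it as compute--phase--uncompute and reserve the tuned angle for a final corrective step; these are cosmetic differences only.
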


\begin{proof} 
 The adaptive algorithm is presented in Algorithm \ref{algorithm:adaptive}, of which the key idea is to apply $n$  Grover searches  synchronously on $n$ positions. It is well known that  Grover's algorithm  can be adjusted to an exact version  that finds the target state with certainty, if the proportion of the target states, whose value is $\frac{1}{k}$ in our setting, is known in advance.  

\begin{algorithm}[htb]
    \SetKwInput{Runtime}{Runtime}
    \SetKwInOut{KWProcedure}{Procedure}
    \caption{An adaptive quantum algorithm for Mastermind  with $n$ positions and $ k$ colors}
    \label{algorithm:adaptive}
    \LinesNumbered
    \KwIn {A {black-peg} oracle $B_s$ for $s\in [k]^n$ such that $B_s| x \rangle |b\rangle = |x\rangle |b\oplus_{n+1} B_s(x)\rangle$}
    \KwOut {The secret string $s$}
    \Runtime {$O(\sqrt{k})$ queries to $B_s$.  Succeeds with certainty.}
    \KWProcedure{}
    Prepare the initial state $\ket{\Phi_0}=\ket{0}^{\otimes n}\ket{0} \in (C^k)^{\otimes n} \otimes C^{n + 1}$; \label{adaptivestep1} 
    Set the number of  iterations $T = \lceil \frac{\pi}{4 \arcsin(\sqrt{\frac{1}{k}})} - \frac{1}{2} \rceil $ and the rotation angle $\phi = 2 \arcsin(\frac{sin(\frac{\pi}{4T + 2})}{\sin(\theta)})$. \label{adaptivestep2} 
    
    Apply the unitary transformation $QFT_k ^{\otimes n} \otimes I$ to $\ket{\Phi_0}$. 
    
    \For{l = 1 to T} {\label{adaptivestep4}
    
    Apply the unitary operator $O_s(\phi)$, where $O_s(\phi) = B_s^{\dagger}(I \otimes D(\phi)) B_s$, $D(\phi) = \sum_{j = 0}^{n} e^{ij\phi} | j \rangle \langle j |$. 
    
    Apply the unitary operator $S_0(\phi)$, where $S_0(\phi) = ( QFT_k (I + (e^{i\phi} - 1) \ket{0} \langle 0 | ) QFT_k ^{\dagger}) ^{\otimes n} \otimes I $. \label{adaptivestep5} 
    
    } 

    Measure the first $n$ registers in the computational basis.
\end{algorithm}

At the first step, we prepare the initial state $$\Phi_0 = \ket{0}^{\otimes n}\ket{0} \in (C^k)^{\otimes n} \otimes C^{n + 1},$$ where  $(C^k)^{\otimes n}$  is associated with the query registers used to store the query string $x$ and $ C^{n+1}$ is associated with the auxiliary register used to store the query result $B_s(x)$. In addition, we need to set some parameters for  the exact Grover search. There are several approaches to achieve the exact Grover search \cite{ Brassard2002,hoyer2000arbitrary,long2001grover}. Here we use the approach proposed  in \cite{long2001grover}, whose parameters including the number of iterations $T$ and the rotation angle $\phi$ are given below:\footnote{Note that in \cite{long2001grover}, $\phi$ equals $2 \arcsin(\frac{sin(\frac{\pi}{4J + 6})}{\sin(\theta)})$ with the iteration number being $J+1$. If we denote $J'=J+1$, then $\phi = 2 \arcsin(\frac{sin(\frac{\pi}{4J' + 2})}{\sin(\theta)}) $.}
\begin{equation*}
    \begin{split}
        & T = \lceil \frac{\pi}{4 \arcsin(\sqrt{\frac{1}{k}})} - \frac{1}{2} \rceil, \\
        & \phi = 2 \arcsin(\frac{sin(\frac{\pi}{4T + 2})}{\sin(\theta)}) 
           \end{split}
    \label{equation:grover_param}
\end{equation*}
with  $\theta = \arcsin(\sqrt{\frac{1}{k}})$.

At the second step, apply the unitary transformation $QFT_k ^{\otimes n} \otimes I$ to $\ket{\Phi_0}$ to create the uniform superposition state
$$| \Phi_1 \rangle = (QFT_k ^{\otimes n} \otimes I) \ket{\Phi_0} =\frac{1}{\sqrt{k^n}}\sum_{x\in[k]^n}|x \rangle| 0 \rangle=  \bigotimes ^n_{i=1}\left(\frac{1}{\sqrt{k}}\sum_{x_i = 0}^{k - 1}|x_i \rangle\right) \otimes | 0 \rangle.$$ 

From the third  to the sixth step, apply  $T$ Grover iteration operators $\left(S_0(\phi)O_s(\phi)\right)^T$ to $| \Phi_1 \rangle$, where 
\begin{align*}
         S_0(\phi) &= ( QFT_k (I + (e^{i\phi} - 1) \ket{0} \langle 0 | ) QFT_k ^{\dagger}) ^{\otimes n} \otimes I,  \\
        O_s(\phi) &= B_s^{\dagger}(I \otimes D(\phi)) B_s, 
\end{align*}
with
\begin{equation*}
D(\phi) = 
    \begin{bmatrix}
    e^{i 0 \phi} & 0 & 0 & \cdots & 0 \\
    0 & e^{i 1 \phi} & 0 & \cdots & 0 \\
    0 & 0 & e^{i 2 \phi} & \cdots & 0 \\
    \vdots &  & \ddots &  &  \vdots \\
    0 & 0  & \cdots & 0 & e^{i n \phi}\\
    \end{bmatrix}.    
    \label{equation:D}
\end{equation*}

Thus, after the sixth step we get
\begin{align}
| \Phi_2 \rangle&=(S_0(\phi)O_s(\phi))^T| \Phi_1 \rangle\\
&=\bigotimes ^n_{i=1}\left(\left(S'_0(\phi)Q_{s_i}(\phi)\right)^T \frac{1}{\sqrt{k}}\sum_{x_i = 0}^{k - 1}|x_i \rangle\right) \otimes | 0 \rangle \label{pG}\\
&=\ket{s_1s_2\cdots s_n}| 0 \rangle,\label{s_result}
\end{align}
where $ S'_0(\phi) = QFT_k (I + (e^{i\phi} - 1) \ket{0} \langle 0 | ) QFT_k ^{\dagger}$ and $Q_{s_j}(\phi)$ is defined as 
\begin{align}
 Q_{s_j}(\phi)\ket{x_j}= e^{i\phi \delta_{s_jx_j}}\ket{x_j}\label{Qsi}
\end{align}
which is to decide whether $x_j$ equals to $s_j$ or not.  We will explain in more details later why  Eq. (\ref{pG}) holds based on Lemma \ref{lemma:Qs}.  Now assume that it is right. Then one see that  $\left(S'_0(\phi)Q_{s_i}(\phi)\right)^T \frac{1}{\sqrt{k}}\sum_{x_i = 0}^{k - 1}|x_i \rangle$ is actually the exact version of Grover's  algorithm for identifying an $x_i$ such that $x_i=s_i$.  Since the proportions of the target states in $n$ synchronous Grover searches are all $1/k$, the number of the iterations and the rotation angle are the same for each Grover search.
As a result, we get Eq. (\ref{s_result}), and then the algorithm outputs the secret string $s$ with certainty by measuring the first $n$  registers.

The number of iterations of the operator $O_s(\phi)$ is $T = \lceil \frac{\pi}{4 \arcsin(\sqrt{\frac{1}{k}})} - \frac{1}{2} \rceil = O(\sqrt{k})$, and thus  the number of queries to $B_s$ is $O(\sqrt{k})$, which concludes the proof of Theorem \ref{theorem:adaptive}.
\end{proof}
 
Now we are going to explain Eq. (\ref{pG}), which means that the unitary operator $(S_0(\phi)O_s(\phi))^T$  plays a role as  $n$ synchronous Grover searches on $n$ positions. First, $S_0(\phi)$ represents the general diffusion operator of  Grover's algorithm $S'_0(\phi)=QFT_k (I + (e^{i\phi} - 1) \ket{0} \langle 0 | ) QFT_k ^{\dagger}$ applied on $n$ $k$-dimensional spaces in parallel.  Second, we have a look at the effect of $O_s(\phi) = B_s^{\dagger}(I \otimes D(\phi)) B_s$.
Recall that the black-peg oracle $B_s$ works as $B_s| x \rangle |b\rangle = |x\rangle |b\oplus_{n+1} B_s(x)\rangle$, where $|x \rangle \in (C^k)^{\otimes n}$, $| b \rangle \in C^{n + 1}$. 
Then we have 
\begin{lemma}\label{lemma:Qs}
Let $O_s(\phi) = B_s^{\dagger}(I \otimes D(\phi)) B_s$. There is \begin{align*}
       O_s(\phi) |x\rangle |0\rangle= \mathop{\bigotimes}_{j = 1}^{n}  Q_{s_j}(\phi)\ket{x_j}|0\rangle
\end{align*}
for $s=s_1s_2\cdots x_n\in [k]^n, x=x_1x_2\cdots x_n \in [k]^n$.
\end{lemma}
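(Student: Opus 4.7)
The plan is to verify Lemma \ref{lemma:Qs} by simply tracking the action of $O_s(\phi) = B_s^{\dagger}(I \otimes D(\phi)) B_s$ on the input state $|x\rangle|0\rangle$ step by step, and then recognizing that the accumulated phase factorizes across the $n$ position registers in exactly the form encoded by $\bigotimes_j Q_{s_j}(\phi)$.

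First, I would apply $B_s$ to the initial state. By the defining action $B_s|x\rangle|b\rangle = |x\rangle|b \oplus_{n+1} B_s(x)\rangle$ with $b=0$, we obtain $|x\rangle|B_s(x)\rangle$, where $B_s(x) = |\{i : s_i = x_i\}| = \sum_{j=1}^n \delta_{s_j x_j}$. Second, I would apply $I \otimes D(\phi)$; since $D(\phi) = \sum_{j=0}^n e^{ij\phi}|j\rangle\langle j|$ and $B_s(x) \in \{0,1,\dots,n\}$, this scales the state by the phase $e^{i B_s(x)\phi}$, giving $e^{i B_s(x)\phi}|x\rangle|B_s(x)\rangle$. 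Third, I would apply $B_s^{\dagger}$, which inverts the first step and returns the auxiliary register to $|0\rangle$ while leaving the global phase untouched, yielding $e^{i B_s(x)\phi}|x\rangle|0\rangle$.

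The remaining step is to factorize the accumulated phase across the $n$ tensor factors of $|x\rangle = |x_1\rangle \cdots |x_n\rangle$. Using $B_s(x) = \sum_{j=1}^n \delta_{s_j x_j}$, we have
\begin{equation*}
e^{i B_s(x)\phi} = \prod_{j=1}^n e^{i\phi \delta_{s_j x_j}}.
\end{equation*}
Distributing each scalar factor $e^{i\phi\delta_{s_j x_j}}$ onto the corresponding register $|x_j\rangle$ and invoking the definition $Q_{s_j}(\phi)|x_j\rangle = e^{i\phi \delta_{s_j x_j}}|x_j\rangle$ in Eq. (\ref{Qsi}), the state becomes $\bigotimes_{j=1}^n Q_{s_j}(\phi)|x_j\rangle \otimes |0\rangle$, which is the claimed identity.

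There is no real obstacle here; the only point that needs a brief sanity check is that the modular addition $\oplus_{n+1}$ in the oracle and the pure-integer phase in $D(\phi)$ are compatible, i.e., that $B_s(x) \le n$ so that no wraparound occurs in Step 1 and the phase applied in Step 2 genuinely equals $e^{iB_s(x)\phi}$. This is immediate from $B_s(x) \in \{0,1,\ldots,n\}$, so the three-line computation above constitutes a complete proof, and this lemma is precisely what justifies the tensor-product rewriting in Eq. (\ref{pG}) and hence the interpretation of $(S_0(\phi) O_s(\phi))^T$ as $n$ synchronous exact Grover searches.
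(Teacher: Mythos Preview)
Your proof is correct and follows essentially the same three-step computation as the paper: apply $B_s$, then $I\otimes D(\phi)$, then $B_s^{\dagger}$, and finally factorize the phase $e^{i\phi B_s(x)} = \prod_j e^{i\phi\delta_{s_jx_j}}$ to recognize the tensor product of the $Q_{s_j}(\phi)$. Your additional remark that $B_s(x)\le n$ prevents wraparound in the $\oplus_{n+1}$ arithmetic is a welcome sanity check that the paper leaves implicit.
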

\begin{proof} By direct calculation, we have
\begin{align*}
       O_s(\phi) |x\rangle |0\rangle &=B_s^{\dagger}(I \otimes D(\phi)) B_s |x\rangle |0\rangle \\
        &=B_s^{\dagger}(I \otimes D(\phi))|x\rangle |B_s(x)\rangle \\
        &= e^{i\phi B_s(x)} B_s^{\dagger} |x\rangle|B_s(x) \rangle \\
        &= e^{i\phi B_s(x)} |x \rangle |0 \rangle. 
    \label{equation:2}
\end{align*}
 Note that   $B_s(x) = \sum_{j = 1}^{n} \delta_{s_jx_j}$.
 Thus we have
\begin{align}
       O_s(\phi) |x\rangle|0\rangle  & = e^{i\phi \sum_{j = 1}^{n} \delta_{s_jx_j}} |x_1x_2 \cdots x_n \rangle |0 \rangle \\
       & = \mathop{\bigotimes}_{j = 1}^{n} e^{i\phi \delta_{s_jx_j}} | x_j \rangle  |0 \rangle \label{eq-22}
    \\
   & = \mathop{\bigotimes}_{j = 1}^{n}  Q_{s_j}(\phi)\ket{x_j}|0\rangle, \label{eq-23} 
\end{align}
where  Eq. (\ref{eq-23}) follows from substituting  Eqs. (\ref{Qsi}) into (\ref{eq-22}).
\end{proof}

\subsection{ Tight Lower Bound for Quantum Black-peg Mastermind } \label{subsect:adaptive2}

We first prove the lower bound of quantum complexity for Black-peg Mastermind, and then conclude with some noteworthy remarks.



\begin{theorem}
\label{lowerbound-adaptive-blackpeg}
For the Black-peg Mastermind with $n$ positions and $k$ colors, any  quantum algorithm requires at least $\Omega(\sqrt{k})$ black-peg queries.
\end{theorem}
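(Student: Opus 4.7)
My plan is to reduce $B(k,1)$---quantum unstructured search among $k$ colors---to $B(k,n)$, and then apply the BBBV lower bound $\Omega(\sqrt{k})$ of \cite{BennettBBV97}. Denoting by $Q(k,n)$ the quantum query complexity of $B(k,n)$, the key step is to establish the monotonicity
\[
Q(k,n) \;\geq\; \tfrac{1}{2}\, Q(k,m) \quad \text{whenever } n \geq m,
\]
which immediately gives $Q(k,n) \geq \tfrac{1}{2}\, Q(k,1) = \Omega(\sqrt{k})$; the constant factor is harmless for the asymptotic bound.

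The reduction exploits the \emph{separability} of the black-peg oracle: if we write $s = s' s'' \in [k]^n$ and $x = x' x'' \in [k]^n$ as concatenations with $s',x' \in [k]^m$ and $s'',x'' \in [k]^{n-m}$, then
\[
B_s(x) \;=\; B_{s'}(x') + B_{s''}(x''),
\]
since distinct positions contribute independently. Given any $T$-query algorithm $\mathcal{A}$ for $B(k,n)$, I will solve $B(k,m)$ on an input $s'$ as follows: pick any known pad $s''$, pretend the true secret is $s = s's''$, and simulate each invocation of $B_s$ using the supplied oracle $B_{s'}$. The term $B_{s''}(x'')$ is a classical function of $x''$ known to the simulator, so it can be added into the output register by a query-free controlled permutation. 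The term $B_{s'}(x')$ is added by the standard compute--copy--uncompute pattern: one query to $B_{s'}$ writes the value coherently into an ancilla, a query-free controlled adder shifts the output register modulo $n+1$, and one query to $B_{s'}^\dagger$ erases the ancilla---two queries per simulated call, which is precisely why the factor $\tfrac{1}{2}$ appears.

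The main technical subtlety is the mismatch in output moduli between $B_{s'}$ (mod $m+1$) and $B_s$ (mod $n+1$). I will handle this by sizing the ancilla to $\lceil \log_2(n+1)\rceil$ qubits and observing that $B_{s'}(x') \in \{0,\dots,m\} \subseteq \{0,\dots,n\}$, so its value may be freely reinterpreted modulo $n+1$. A cautionary remark already flagged in the techniques section bears repeating: the argument fundamentally uses additive separability of $B_s$, which the black-white-peg oracle $BW_s$ lacks, since the wrong-position count $W_s$ globally couples the two halves. This is why the same route cannot yield an $\Omega(\sqrt{k})$ bound in the black-white-peg setting---and indeed Theorem \ref{theoremadabwp} provides an upper bound that breaks this barrier when $n \leq k \leq n^2$.
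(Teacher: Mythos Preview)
Your proposal is correct and follows essentially the same route as the paper: pad the $m$-position secret to length $n$ with a known suffix, exploit the additive separability $B_{s's''}(x'x'')=B_{s'}(x')+B_{s''}(x'')$ to simulate each $B(k,n)$ oracle call via a compute--add--uncompute gadget around $B_{s'}$, and then invoke the BBBV bound for $B(k,1)$. Your treatment is in fact slightly more careful than the paper's---you explicitly track the factor $\tfrac{1}{2}$ from the two calls (the paper's figure also uses $B_s$ and $B_s^\dagger$ once each, though the text says ``same query complexity''), and you address the output-modulus mismatch head-on---but none of this changes the underlying argument.
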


\begin{proof}  Denote by $B(k, n)$ the Black-peg Mastermind  with with $n$ positions and $k$ colors, and denote by $Q(k, n)$ the quantum query complexity of $B(k, n)$. We will show that $Q(k, n) \geq Q(k, m)$ if $n \geq m$, which leads to $Q(k, n) \geq  Q(k, 1)$. On the other hand,   $B(k, 1)$ is actually the unstructured search problem: searching for one color in $k$ colors, whose  quantum lower bound  is well-known to be $\Omega(\sqrt{k})$ \cite{BennettBBV97}. Thus, we have $Q(k, n) \geq  Q(k, 1)=\Omega(\sqrt{k})$.

It remains to prove $Q(k, n) \geq Q(k, m)$ if $n \geq m$.  It suffices to show that if there is a quantum algorithm for $B(k, n)$, then we can construct a quantum algorithm for $B(k, m)$ with the same query complexity, provided $n \geq m$. We first present the idea in the classical case, and then show that it is feasible in the quantum case. 

Let $s \in [k]^m$ be the secret string of  $B(k, m)$. Firstly, we append a fixed color string of length $n-m$, say $1^{n-m}$, to $s$, obtaining a new secret string $s'$ of length $n$. In the following we show how to implement the black-peg oracle $B_{s'}$ by using the black-peg oracle $B_s$. Let $x$ be any query string to the secret string $s'$.  It is easy to see that 
\begin{equation}
    B_{s'}(x) = B_s(x[1\dots m]) + \sum_{i=m+1}^{n}\delta_{s'_ix_i}.
\end{equation} 
By the same idea, we construct $B_{s'}$ by $B_{s}$ in the quantum case as shown in Figure \ref{fig:lowerbound}. Therefore, if $A$ is an algorithm for $B(k, n)$, then, by replacing the query oracle in $A$ with the circuit in  Figure \ref{fig:lowerbound},  we can obtain an  algorithm  $A'$ for $B(k, m)$ that returns the secret $s'=s1^{n-m}$ where $s$ is the $m$-bit secret we want.

\begin{figure}[htbp]  
\centering\includegraphics[width=10cm]{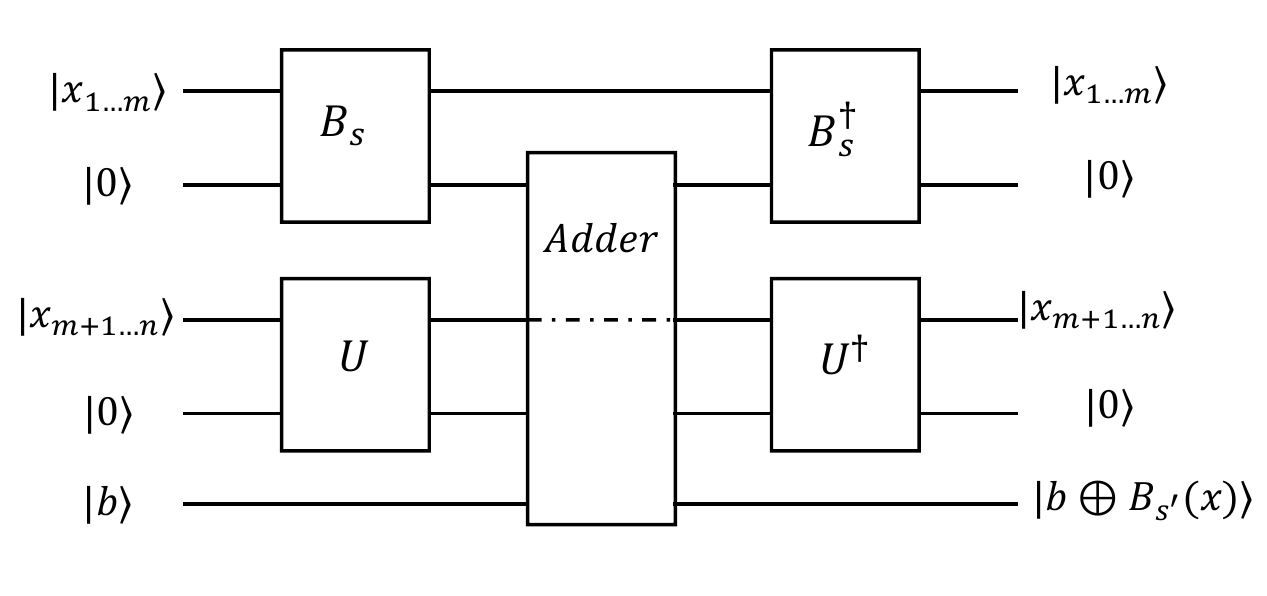} 
\caption{The quantum circuit diagram for implementing $B_{s'}$ with $B_s$ where $U | x_{m+1\dots n} \rangle | 0 \rangle = | x_{m+1\dots n} \rangle | \sum_{i=m+1}^{n}\delta_{s'_ix_i} \rangle $ and $Adder |a\rangle |b\rangle |c\rangle = |a\rangle |b\rangle | c + a + b \rangle$. Note that the wire denoted by  $| x_{m+1\dots n} \rangle$ has no interaction with $Adder$, and thus is depicted with dotted lines when passing through $Adder$. It should be pointed out that the dimension of the auxiliary register of $B_s$ is different from that of $|b\rangle$, so we did not directly add $B_s(x_{1\dots m})$ to $|b\rangle$. Ignoring the two auxiliary registers indicted by $0$, the overall effect the circuit achieves is $|x\rangle |b\rangle \rightarrow |x\rangle | b + B_{s'}(x) \rangle$, as desired.}       
\label{fig:lowerbound}   
\end{figure}
\end{proof}
\begin{remark} 

It is worth mentioning the following two points:
\begin{itemize}
   \item Firstly,  it is not trivial to reduce the problem of small scales to the one of large scales like what we have done in the above proof. In fact, the reason  why the above reduction is available is the  \textbf{separability} property that the  black-peg oracle has, as shown in the following formula:
        \begin{equation}
            B_{s_1 s_2}(x_1 x_2) = B_{s_1}(x_1) + B_{s_2}(x_2),
        \end{equation} 
where $s_i$ and $x_i$ denote a substring.
   However, the  black-white-peg oracle does not satisfy the {separability} property, and thus the above proof does not hold for   black-white-peg Mastermind.  Actually, if black-white-peg queries are allowed, then the quantum  complexity can  break through the lower bound $O(\sqrt{k})$ as shown in Theorem \ref{theoremadabwp}.
    
   \item   Secondly, the reason why we did not use the same method to obtain the lower bound of the non-adaptive quantum complexity for Black-peg Mastermind may not be obvious. In the classical case, the reduction is easy to deal with. However, we must be more careful when addressing the quantum case. As showed in Figure \ref{fig:lowerbound}, we need to call $B_s$  twice sequentially in the conversion process, which  destroys the non-adaptive characteristics of the algorithm. Therefore, it seems infeasible to obtain a lower bound for non-adaptive algorithm by using the idea behind the proof of Theorem \ref{lowerbound-adaptive-blackpeg}.
\end{itemize}
\end{remark}



\subsection{Adaptive Quantum Algorithm with Black-white-peg Queries } \label{subsect:adaptive3}

For the secret $s \in [k]^n$,  let $C_s = \{s_i \in [k]: i\in \{1,2,\dots, n\}\}$, that is, the  set of colors occupied by string $s$. Thus, the size  of $C_s$ is not more than $n$.
For an arbitrary color set $T=\{t_1, t_2,\cdots, t_{|T|}\}\subseteq[k]$  with $|T|\leq n$, a bit string $x^{(T,s)}=x^{(T,s)}_1\dots x^{(T,s)}_{|T|} $ associated with  $s$  is defined by
\begin{equation}
    x^{(T,s)}_i=
    \begin{cases}
        1, &{t_i \in C_s},\\
        0, &{t_i \notin C_s},
    \end{cases}
\end{equation}
which indicates whether the $i$-th color $t_i$ in $T$ is used in $s$ or not. Then we have the following result.

\begin{lemma}\label{lemmabwcs}
  Given the secret $s \in [k]^n$ and   an arbitrary color set $T\subseteq[k]$  with $|T|\leq n$, 
   there is a quantum algorithm that uses $O(1)$ black-white-peg queries and returns $x^{(T,s)}$ with certainty.
\end{lemma}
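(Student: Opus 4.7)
The plan is to reduce the problem to Bernstein--Vazirani style inner product recovery over $\mathbb{F}_2$, which learns $x^{(T,s)}\in\{0,1\}^{|T|}$ in a single call once a suitable oracle has been set up. Concretely, I will spend a constant number of black-white-peg queries to build a quantum oracle $F$ satisfying $F\ket{y}\ket{b}=\ket{y}\ket{b\oplus (y\cdot x^{(T,s)}\bmod 2)}$ and then invoke the Bernstein--Vazirani algorithm on $F$ exactly once to read off $x^{(T,s)}$ with certainty.

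For the construction, pick a ``dummy'' color $d\in[k]\setminus T$ (temporarily assuming $|T|<k$). For any $y\in\{0,1\}^{|T|}$, define a classical query string $x^y\in[k]^n$ by setting $x^y_i=t_i$ when $i\le|T|$ and $y_i=1$, and $x^y_i=d$ in every other position (including the padding positions $|T|+1,\dots,n$). Using the standard identity $B_s(x)+W_s(x)=\sum_{c\in[k]}\min(m_c(s),m_c(x))$, where $m_c(\cdot)$ counts occurrences of color $c$, and observing that $m_{t_i}(x^y)=y_i\in\{0,1\}$ while $m_d(x^y)=n-\mathrm{wt}(y)$, one obtains the key identity
\[
B_s(x^y)+W_s(x^y)\;=\;y\cdot x^{(T,s)}\;+\;\min\bigl(m_d(s),\,n-\mathrm{wt}(y)\bigr).
\]

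The remaining difficulty is the $y$-dependent ``dummy term'' $\min(m_d(s),n-\mathrm{wt}(y))$. I handle it with one preliminary black-white-peg query on the constant string $d^n$, which returns exactly $m_d(s)$. Since $m_d(s)$ is now a known constant and $\mathrm{wt}(y)$ is computable from $y$, the correction $\min(m_d(s),n-\mathrm{wt}(y))\bmod 2$ can be realised coherently as a classical reversible function of $y$. Then I build $F$ by the standard compute/uncompute pattern: reversibly produce $\ket{y}\ket{x^y}$ using ancillas, call $BW_s$ once to write $B_s(x^y)+W_s(x^y)$ into an auxiliary register, subtract the known correction modulo $2$ and XOR the residue into the target qubit, and finally uncompute the ancillas. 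One application of $F$ followed by Hadamards on the first $|T|$ qubits, exactly the Bernstein--Vazirani subroutine \cite{Bernstein1997}, outputs $x^{(T,s)}$ with certainty. The total cost is $1+1=O(1)$ black-white-peg queries.

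The main obstacle is precisely this dummy term: without the calibration query that fixes $m_d(s)$, the $\min$ blocks a direct BV reduction, since its parity varies with $\mathrm{wt}(y)$. A minor additional issue is the corner case $|T|=k$ (which forces $k\le n$ and leaves no dummy outside $T$); I plan to dispose of it by applying the above construction to $T'=T\setminus\{t_{|T|}\}$ with $t_{|T|}$ itself serving as dummy, and then deciding the last bit $x^{(T,s)}_{|T|}$ by a single extra query on the constant string $t_{|T|}^n$, whose answer $m_{t_{|T|}}(s)$ is nonzero iff $t_{|T|}\in C_s$. Some routine care is also required in verifying the $B+W=\sum_c\min(m_c(s),m_c(x))$ identity and in ensuring that the ancilla bookkeeping in the compute/uncompute construction of $F$ is fully reversible.
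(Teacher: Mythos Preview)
Your proposal is correct and follows essentially the same route as the paper: one calibration query on a constant-color string to learn the multiplicity of a dummy color, then the identity $B_s(x)+W_s(x)=\sum_{c}\min(m_c(s),m_c(x))$ to convert one further black-white-peg call into the integer $y\cdot x^{(T,s)}$, and finally Bernstein--Vazirani to read off $x^{(T,s)}$ exactly. The only cosmetic difference is the choice of dummy---the paper fixes it to color $1$ (which may lie in $T$, handled there by a two-case formula for the inner product) whereas you pick $d\notin T$ and dispose of the corner case $|T|=k$ separately.
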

\begin{proof}
    The idea is as follows: first convert the  provided  oracle  into the inner product oracle, and then apply the Bernstein-Vazirani algorithm \cite{Bernstein1997}. Now we show  the inner product $x^{(T,s)}\cdot y$ for $y \in [2]^{|T|}$ can be computed by using two  black-white-peg queries.
    
First,  we submit the string consisting of   only $1$ to the black-white-peg function $BW_s$ and record the result as $\{B_s(1), W_s(1)\}$. 

Second, given $y = y_1y_2 \dots y_{{|T|}} \in [2]^{|T|}$, we define  a string $z \in [k]^n$ as follows:
    \begin{equation*}
       z_i= 
        \begin{cases}
             t_i, &y_i=1 ~\& ~1 \leq i \leq |T|\\
             1, &y_i=0  ~\& ~ 1 \leq i \leq  |T|\\
             1, & |T| +1 \leq i \leq n.
        \end{cases}
    \end{equation*}
Submit $z$ to the black-white-peg function $BW_s$ and record the result as $\{B_s(z), W_s(z)\}$. Then we have 
\begin{equation*}
    x^{(T,s)}\cdot y=  
    \begin{cases}
         B_s(z) + W_s(z)- \min\{n-|y|, B_s(1)\},  1 \notin \{t_i | y_i = 1, 1 \leq i \leq  |T|\} ~\text{or}~ B_s(1)=0 \\
             B_s(z) + W_s(z)- \min\{n-|y|, B_s(1)-1\}, \text{otherwise}.
    \end{cases}
\end{equation*}

As a result,  $x^{(T,s)}\cdot y$ can be computed by using two black-white-peg queries. Thus,   $x^{(T,s)}$ can be learn with certainty using $O(1)$ black-white-peg queries by the Bernstein-Vazirani algorithm \cite{Bernstein1997}.
\end{proof}

Based on the above result, we obtain a quantum algorithm for identifying the secret $s$.
\begin{theorem}\label{theoremadabwp}
There is an adaptive quantum algorithm for the Mastermind game with $n$ positions and $k$ colors that uses $O(\lceil \frac{k}{n}  \rceil + \sqrt{|C_s|})$ black-white-peg queries and returns the secret $s$ with certainty, where $C_s$ is the set of colors occupied by $s$.
\end{theorem}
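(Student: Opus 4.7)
The plan is to break the algorithm into two phases. In the first phase, I will use Lemma~\ref{lemmabwcs} to determine $C_s$ explicitly, and in the second phase I will run a restricted version of the adaptive algorithm from Theorem~\ref{theorem:adaptive} over the colors in $C_s$ only, treating the black-white-peg oracle as a black-peg oracle by ignoring the white component.

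For the first phase, partition $[k]$ into $\lceil k/n \rceil$ subsets $T_1, T_2, \dots, T_{\lceil k/n \rceil}$, each of size at most $n$. For each $T_j$, invoking Lemma~\ref{lemmabwcs} produces the indicator string $x^{(T_j, s)}$ with certainty using $O(1)$ black-white-peg queries. Since $t \in C_s$ if and only if $t \in T_j$ and the corresponding bit of $x^{(T_j, s)}$ is $1$, the union $\bigcup_j \{ t \in T_j : x^{(T_j, s)}_{\cdot}=1 \}$ equals $C_s$. This phase consumes $O(\lceil k/n \rceil)$ black-white-peg queries.

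For the second phase, once $C_s$ is known, we know the secret satisfies $s \in C_s^n$. I will run Algorithm~\ref{algorithm:adaptive} verbatim, except that each of the $n$ synchronous exact Grover searches is performed on the $|C_s|$-dimensional subspace spanned by $\{|c\rangle : c \in C_s\}$ instead of the full $k$-dimensional color space. The proportion of target states in each of the $n$ parallel searches becomes exactly $1/|C_s|$, so the parameters $T$ and $\phi$ from Theorem~\ref{theorem:adaptive} are re-set using $|C_s|$ in place of $k$, and the identical analysis shows that the algorithm returns $s$ with certainty using $O(\sqrt{|C_s|})$ queries. The black-peg value $B_s(x)$ needed by the oracle $O_s(\phi)$ is obtained by discarding the white component of each black-white-peg reply.

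Summing the two phases gives the claimed bound $O(\lceil k/n \rceil + \sqrt{|C_s|})$. The only subtle point is verifying that restricting the Grover search to $C_s^n$ is legitimate: the query strings sent to the oracle are genuine elements of $[k]^n$ (since $C_s \subseteq [k]$), so the black-peg oracle $B_s$ acts correctly, and the separability property $B_s(x) = \sum_j \delta_{s_j x_j}$ still underlies Lemma~\ref{lemma:Qs}. Hence the correctness proof of Theorem~\ref{theorem:adaptive} carries over with $k$ replaced by $|C_s|$, and the only mild obstacle—ensuring that the exact Grover parameters remain well-defined when the search space is reduced—is resolved because $|C_s| \ge 1$ whenever the problem is nontrivial and the formulas for $T$ and $\phi$ depend only on the success probability $1/|C_s|$.
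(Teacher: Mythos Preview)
Your proposal is correct and follows essentially the same two-phase approach as the paper: partition $[k]$ into $\lceil k/n\rceil$ blocks of size at most $n$, apply Lemma~\ref{lemmabwcs} to each block to recover $C_s$, and then run the adaptive algorithm of Theorem~\ref{theorem:adaptive} restricted to the color set $C_s$ while ignoring the white-peg information. Your additional remarks about why the restriction to $C_s^n$ is legitimate and why the exact-Grover parameters remain well defined are a welcome elaboration of points the paper leaves implicit.
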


\begin{proof}
The color set $[k]$ is divided  into disjoint sets  $T_1, \dots, T_{ \lceil \frac{k}{n}  \rceil }$ such that $|T_i|=n$ for $i<\lceil \frac{k}{n}\rceil $ and $|T_{ \lceil \frac{k}{n}  \rceil }|\leq n$. By Lemma \ref{lemmabwcs}, we can learn $C_s$ with certainty using $O(\lceil \frac{k}{n}  \rceil)$ black-white-peg queries. Now, the problem is to solve Mastermind game with $n$ positions and $|C_s|$ colors. This can be done  using $O(\sqrt{|C_s|})$ black-white-peg queries by Theorem \ref{theorem:adaptive} (the white-peg information is simply ignored).
The overall complexity is  $O(\lceil \frac{k}{n}  \rceil + \sqrt{|C_s|})$.
\end{proof}

\begin{remark}
   When $n \leq k \leq n^2$, the algorithm in Theorem \ref{theoremadabwp} has a complexity lower than  the bound $\Omega(\sqrt{k})$ given in Theorem \ref{lowerbound-adaptive-blackpeg}. For instance, when $k= n^{\frac{3}{2}}$, we have $O(\lceil \frac{k}{n}  \rceil + \sqrt{|C_s|})=O(\sqrt{n})$, but $\Omega(\sqrt{k})=\Omega(n^{\frac{3}{4}})$.
\end{remark}

\section{Conclusions and Discussions}
In  this paper, we have  investigated quantum  algorithms for playing the popular game of  Mastermind, obtaining substantial quantum speedups.    Technically,  we have developed a  framework for designing quantum algorithms for the general string learning problem, by discovering  a new structure  that not only  allows  huge quantum speedups  on plying Mastermind, but also is very likely helpful for addressing  other string learning problems with different kinds of query oracles. It is worth pointing out that the non-adaptive algorithm is more practical than the adaptive one, since the former needs only to run a shorter quantum circuit  $O(k)$ {\it times}, whereas the latter runs a longer quantum circuit consisting of  $O(\sqrt{k})$ {\it blocks}.


In the following we list some problems maybe worthy of further consideration.

\noindent\textbf{ Problem 1:  What is the tight lower bound of the non-adaptive quantum complexity for Black-peg Mastermind?}  We have presented two $O(k)$-complexity non-adaptive
quantum  algorithms for  Black-peg Mastermind. However, it is not clear whether the algorithms are optimal   in the non-adaptive setting.

\noindent\textbf{ Problem 2:  What is the tight lower bound of the quantum complexity for Black-white-peg Mastermind?} For Black-white-peg Mastermind, we have obtained a quantum algorithm with $O(\lceil \frac{k}{n}  \rceil + \sqrt{|C_s|})$ queries. In further work, it is worth exploring the tight lower bound for Black-white-peg Mastermind in both adaptive and non-adaptive settings.

\noindent\textbf{ Problem 3:  What is the quantum  complexity of Mastermind without color repetition?} There is a variation of Mastermind  where color repetition is prohibited in both the secret string $s$ and the query string $x$. In particular, when $k=n$, this variation is called {\it Permutation} Mastermind.  Similar to the case with color repetition, the classical  complexity of Permutation Mastermind in the non-adaptive setting is also $\Theta(n\log n)$ \cite{Glazik2021,Larcher2022}.
The classical  complexity of Permutation Mastermind in the adaptive setting leaves an $O(\log n)$ gap between the lower bound $\Omega (n)$ and the upper bound $O(n \log n)$ \cite{Ouali2018}.
Our quantum algorithms seem not  suitable for this variation.


\bibliography{lipics-v2021-sample-article}

\newpage
\appendix
\section{ The algorithm proposed by  Hunziker and Meyer}
\label{appendix:A}
For the convenience of readers, here we describe the algorithm proposed by  Hunziker and Meyer \cite{hunziker2002quantum}.

The problem considered by Hunziker and Meyer \cite{hunziker2002quantum} is similar to the problem considered in the article, which is aimed to identify an element of $H_k^n$ defined as followed:
\begin{align}
    H_k^n = \{h_a : \{0, \cdots k -1\}^n \rightarrow \{0, 1\} | a \in \{0, \cdots k - 1\}^n ~ and ~ h_a(x) = dist(x, a) \bmod 2 \}
\end{align}
where $dist(x, a)$ is the generalized Hamming distance between $a$ and $x$, i.e., the number of components at which they differ.

\begin{theorem}\cite{hunziker2002quantum}
Algorithm \ref{algorithm:meyer} identifies an element of $H_k^n$ with probability at least $\frac{1}{2} + \epsilon(0 < \epsilon \leq \frac{1}{2})$ for $n \leq -k\ln(\frac{1}{2} + \epsilon)$, using $\lfloor \frac{\pi}{4} \sqrt{k} \rceil$ quantum queries.
\end{theorem}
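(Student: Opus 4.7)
The plan is to exploit the tensor-product structure of the phase oracle, which in turn forces the whole Grover evolution to decouple into $n$ independent single-position Grover searches on $\mathbb{C}^k$. Concretely, the per-position success probability then combines multiplicatively, and the stated constraint $n \leq -k \ln(1/2 + \epsilon)$ is precisely what is needed for the product to exceed $1/2 + \epsilon$.

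First I would factor the phase oracle. For any $a, x \in [k]^n$, the parity of the Hamming distance splits as
\begin{equation*}
(-1)^{h_a(x)} = (-1)^{\mathrm{dist}(x,a)} = \prod_{i=1}^{n}(-1)^{[x_i \neq a_i]},
\end{equation*}
so the phase oracle realised from $h_a$ acts on $(\mathbb{C}^k)^{\otimes n}$ as $\bigotimes_{i=1}^{n} O_{a_i}$ with $O_{a_i} = 2\ket{a_i}\bra{a_i} - I$, up to a global sign $(-1)^n$ that drops out. Next I would observe that Algorithm~C uses as its diffusion operator the tensor product $\bigotimes_{i=1}^{n}(2\ket{+_k}\bra{+_k} - I)$, where $\ket{+_k} = \tfrac{1}{\sqrt{k}}\sum_{c=0}^{k-1}\ket{c}$. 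Because both the oracle and the diffusion factor over the $n$ registers, the full Grover iterator factors as well: $G = \bigotimes_{i=1}^{n} G_i$, where each $G_i$ is the standard Grover iterator on $\mathbb{C}^k$ with unique marked element $\ket{a_i}$.

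The third step is to run the factored evolution starting from $\ket{+_k}^{\otimes n}$ for $T = \lfloor \tfrac{\pi}{4}\sqrt{k}\rceil$ iterations. Since $G^T = \bigotimes_i G_i^T$, the output is a product state $\bigotimes_i \ket{\phi_i^{(T)}}$, where each $\ket{\phi_i^{(T)}}$ is the output of the textbook Grover algorithm searching one marked item among $k$. The standard analysis gives a per-register success probability
\begin{equation*}
p_T \;=\; \bigl|\langle a_i | \phi_i^{(T)}\rangle\bigr|^2 \;=\; \sin^2\bigl((2T+1)\theta_0\bigr), \qquad \theta_0 = \arcsin\!\bigl(1/\sqrt{k}\bigr),
\end{equation*}
which, for the chosen $T$, lies close to $\sin^2(\pi/2) = 1$. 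Measuring all registers in the computational basis therefore returns $a$ with probability exactly $p_T^{\,n}$, using a total of $T$ queries to $h_a$.

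The final, and most delicate, step is to turn $p_T^{\,n} \geq 1/2 + \epsilon$ into the constraint $n \leq -k\ln(1/2+\epsilon)$. The plan is to expand $(2T+1)\theta_0$ around $\pi/2$ and obtain an inequality of the form $p_T \geq e^{-1/k}$ (essentially $\cos^2(c/\sqrt{k}) \geq e^{-1/k}$ for the relevant small constant $c$); then $p_T^{\,n} \geq e^{-n/k} \geq 1/2 + \epsilon$ whenever $n \leq -k \ln(1/2+\epsilon)$. The main obstacle I expect is precisely this tightness issue: the naive bound $p_T \geq 1 - 1/k$ goes the wrong way relative to $e^{-1/k}$, so the proof needs the sharper $\cos^2$ bound (or a direct Taylor-series argument on $\sin^2((2T+1)\theta_0)$) to match the exponential form $e^{-n/k}$ on the nose. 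Everything else — the factoring and the Grover analysis per register — is routine.
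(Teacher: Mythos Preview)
The paper does not actually prove this theorem; it is quoted from \cite{hunziker2002quantum} and presented in Appendix~A purely for the reader's convenience. The only piece of analysis the paper supplies is the oracle factorisation in Eqs.~\eqref{equation:meyer1}--\eqref{equation:meyer4}, showing that one call to $O_{h_a}$ acts (up to the global sign $(-1)^n$) as $\bigotimes_{j=1}^n (-1)^{\delta_{s_jx_j}}$ on the query registers. That is exactly your first step, so on the part where a comparison is possible, your decomposition agrees with the paper. Beyond that factorisation there is no ``paper's own proof'' to compare against; the remaining steps --- tensoring with the per-register diffusion, reducing to $n$ independent one-out-of-$k$ Grover searches, and multiplying per-register success probabilities --- are the standard Hunziker--Meyer analysis you have correctly reconstructed.

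Two small remarks. First, a sign bookkeeping point: the diffusion in Algorithm~\ref{algorithm:meyer} is $QFT_k(I-2\ket{0}\bra{0})QFT_k^\dagger = I-2\ket{+_k}\bra{+_k}$, not $2\ket{+_k}\bra{+_k}-I$ as you wrote; likewise your per-register phase oracle $2\ket{a_i}\bra{a_i}-I$ is the negative of the usual Grover reflection. The two sign flips cancel in the iterate $G_i$, so your conclusion is unaffected, but it is worth stating cleanly. Second, your self-diagnosed obstacle is real: from $|(2T+1)\theta_0-\pi/2|\le\theta_0$ one only gets $p_T\ge 1-1/k$, which is strictly below $e^{-1/k}$, so the clean chain $p_T^n\ge e^{-n/k}\ge \tfrac12+\epsilon$ does not follow from the naive bound. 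Tightening this to match the stated hypothesis $n\le -k\ln(\tfrac12+\epsilon)$ exactly requires the finer estimate you allude to (or accepting the bound as asymptotic in $k$), and that refinement lives in the original Hunziker--Meyer paper rather than in the present one.
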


\begin{algorithm}[htb]
    \SetKwInput{Runtime}{Runtime}
     \SetKwInOut{KWProcedure}{Procedure}
    \caption{Alaogirhm C in \cite{hunziker2002quantum} }
    \label{algorithm:meyer}
    \LinesNumbered
    \KwIn {An oracle $O_{h_a}$ for $a\in [k]^n$ such that $O_{h_a} | x \rangle | b \rangle = | x \rangle | b \oplus h_a(x) \rangle$.}
    \KwOut {The secret string $s$.}
    \Runtime {$O(\sqrt{k})$ queries to $O_{h_a}$.  Succeeds with probability at least $\frac{1}{2} + \epsilon $ when $n < -k\ln(\frac{1}{2} + \epsilon)$.}
    
    \KWProcedure{}
    Initial the state to $ | 0 \rangle ^{\otimes n} | 0 \rangle \in (C_k)^{\otimes n}\otimes C_2$.
    
    Apply the unitary transformation $QFT_{k}^{\otimes n} \otimes (HX)$
    
    \For{i = 1 : $\lfloor \frac{1}{2}(\pi / (2\arcsin(\frac{1}{\sqrt{k}})) - 1)\rceil $}{
        apply the oracle $O_{h_a}$.
        
        apply the unitary transformation $(QFT_k (I - 2|0 \rangle \langle 0 |)QFT_k^{\dagger})^{\otimes n } \otimes I$
    }
    Measure the first $n$ registers.
\end{algorithm}

  Hunziker and Meyer \cite{hunziker2002quantum}  claimed that the algorithm  can be  adjusted to an exact version of Grover's algorithm by the methods in  \cite{ long2001grover, hoyer2000arbitrary}, but this is NOT true  as  explained below.

Note that when $k>4$, we have $h_{a}(x) = ({n - \sum_{i = 1}^n \delta_{s_ix_i}}) \bmod 2$, and the quantum oracle works as  $O_{h_a} | x \rangle | b \rangle = | x \rangle | b \oplus h_a(x) \rangle $ where $\oplus$ denotes  XOR. We explain in  details how the $O_{h_a}$ oracle works in the algorithm as shown in Eqs. \eqref{equation:meyer1} $\sim$ \eqref{equation:meyer4}.  It should be pointed out that  Eq. \eqref{equation:meyer3.5} holds as  $
(-1)^{ l} = (-1)^{ l\bmod 2}$ for any $0\leq l\leq n$, but it will not hold if we replace $-1$ with $ e^{i\phi}$ for general $\phi$, since $
    e^{i\phi l} = e^{i\phi l\bmod 2}$ no longer holds.
However, in the exact Grover search  \cite{ long2001grover, hoyer2000arbitrary} it is necessary to realize a general phase $e^{i\phi}$. That is why the algorithm  given by \cite{hunziker2002quantum} can't be adapted to be exact  by the methods in  \cite{ long2001grover, hoyer2000arbitrary}.
\begin{align}
    O_{h_a} (|x \rangle \otimes \frac{1}{\sqrt{2}}(| 0 \rangle - | 1 \rangle))
    \label{equation:meyer1}& = | x \rangle \otimes \frac{1}{\sqrt{2}}(| 0 \oplus h_a(x) \rangle - | 1 \oplus h_a(x) \rangle)\\
    \label{equation:meyer2}& = (-1)^{h_a(x)} | x \rangle \otimes \frac{1}{\sqrt{2}}(| 0 \rangle - | 1 \rangle)\\ 
    \label{equation:meyer3}
    & = (-1)^{(n - \sum_{i = 1}^n \delta_{s_ix_i}) \bmod 2} | x \rangle \otimes \frac{1}{\sqrt{2}}(| 0 \rangle - | 1 \rangle)\\ 
    \label{equation:meyer3.5}
    & = (-1)^{n - \sum_{i = 1}^n \delta_{s_ix_i}} | x \rangle \otimes \frac{1}{\sqrt{2}}(| 0 \rangle - | 1 \rangle)\\ 
    \label{equation:meyer4}
    & = (-1)^n\mathop{\bigotimes}_{i = 1}^{n} (-1)^{\delta_{s_ix_i}}|x_i\rangle \otimes \frac{1}{\sqrt{2}}(| 0 \rangle - | 1 \rangle) 
\end{align}

\section{ $O(k\log k)$  Quantum Algorithm} \label{appendix:B}
Here we present a quantum algorithm with $O(k\log k)$ black-peg queries.
First, two functions will be employed, as described below: 
\begin{itemize}
    \item   $IPK_s$, associated with  $s \in [k]^n$, is defined by   $IPK_s(x)= \sum_{i} s_i \cdot x_i \bmod k$ for any  $x \in [k]^n$. 
    \item   $IPT_s$, associated with  $s \in [k]^n$,  is defined by  $IPT_s(x)= \sum_{i} s_i \cdot x_i \bmod k$ for any  $x \in [2]^n$.
\end{itemize}

\begin{algorithm}[htp]
    \caption{A  quantum algorithm for Mastermind  with $n$ positions and $ k$ colors}
    \label{algorithm:test}\label{nonadaptive}
    \LinesNumbered
    \SetKwInOut{KWProcedure}{Procedure}
    \SetKwInput{Runtime}{Runtime}
    \KwIn {A black-peg oracle $B_s$ for $s\in [k]^n$ such that $B_s| x \rangle |b\rangle = |x\rangle |b\oplus_{n+1} B_s(x)\rangle$. }
    \KwOut {The secret string $s$.}
    \Runtime{$O(k \log k)$  queries to  $B_s$. Succeeds with certainty.}
     \KWProcedure{}

      Prepare the initial  state $\ket{\Phi_0}=\ket{0}^{\otimes n}\ket{k-1}$, where $\ket{0}$ and $\ket{k-1}$ are basis states in a $k$-dimensional Hilbert space.
    
    Apply quantum Fourier transform $QFT_k ^{\otimes n+1}$.
    
    Apply the quantum  oracle of $IPK_s$  that calls the black-peg oracle $B_s$ $O(k \log k)$ times in parallel.
    
    Apply  inverse  quantum  Fourier transform $(QFT_k^\dagger) ^{\otimes n+1}$.
    
    Measure the first $n$ registers in the computational basis.
\end{algorithm}

Now one of our main results is the following theorem.

\begin{theorem}\label{theorem:Nonadapt_klogk}
There is a  quantum algorithm for the  Mastermind game with $n$ positions and $ k$ colors that uses $O(k \log k)$ black-peg queries and returns the secret string with certainty.
\end{theorem}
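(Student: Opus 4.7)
The plan is to realise Algorithm \ref{algorithm:test} as the generalized Bernstein--Vazirani procedure over $\mathbb{Z}_k$, preceded by a reduction that implements the inner-product oracle $IPK_s$ coherently from black-peg queries.  Once the reduction is in place, the correctness of the outer algorithm is routine, and the $O(k\log k)$ bound comes entirely from the cost of the reduction.

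For the outer step, set $\omega=e^{2\pi i/k}$ and trace the circuit.  Applying $QFT_k^{\otimes(n+1)}$ to $|0\rangle^{\otimes n}|k-1\rangle$ produces $\tfrac{1}{\sqrt{k^{n+1}}}\sum_{x\in[k]^n}\sum_{y\in[k]}\omega^{(k-1)y}|x\rangle|y\rangle$, so the action $|x\rangle|y\rangle\mapsto|x\rangle|y\oplus_{k}IPK_s(x)\rangle$ kicks back a phase $\omega^{-(k-1)IPK_s(x)}=\omega^{IPK_s(x)}=\prod_i\omega^{s_ix_i}$ on the query register (using $-(k-1)\equiv 1\pmod k$), and the state factorises as
\begin{equation*}
\bigotimes_{i=1}^{n}\Bigl(\tfrac{1}{\sqrt{k}}\sum_{x_i\in[k]}\omega^{s_ix_i}|x_i\rangle\Bigr)\otimes\tfrac{1}{\sqrt{k}}\sum_{y\in[k]}\omega^{(k-1)y}|y\rangle.
\end{equation*}
Applying $QFT_k^{\dagger}$ to each of the first $n$ registers collapses the $i$-th factor to $|s_i\rangle$, so a computational-basis measurement returns $s$ with certainty using a single call to $IPK_s$.

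The substantive step is to build a coherent $IPK_s$ oracle from $O(k\log k)$ calls to $B_s$.  My starting point is the identity $s_ix_i=\sum_{c,d\in[k]}cd\,\delta_{s_i,c}\delta_{x_i,d}$, which reduces $IPK_s(x)\bmod k$ to the joint counts $N_{c,d}(x)=|\{i:s_i=c,\ x_i=d\}|$.  For any fixed $x$ the classes $V_d=\{i:x_i=d\}$ are known, and for each color $c$ a padded query string in the spirit of Lemma \ref{lemmaBsc} routes one black-peg answer into a known linear combination of the $N_{c,\cdot}(x)$; iterating over colors and assembling $\sum_{c,d}cd\,N_{c,d}(x)\bmod k$ on $O(\log k)$-bit arithmetic registers yields the modular inner product.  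The $\log k$ factor can enter either because only the $\lceil\log k\rceil$ output bits of the modular sum need to be assembled---roughly $O(k)$ parallel $B_s$ queries per bit---or, alternatively, by first converting $B_s$ into a binary inner-product oracle at cost $O(k)$ per evaluation and then running an extended BV on $\mathbb{Z}_2$ once per bit of $s_i$; both routes fit the claimed budget.

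The main obstacle will be the coherent bookkeeping in this reduction: the outer BV step demands that $IPK_s$ act on uniform superpositions over $[k]^n$, so the gadget must produce the modular sum without leaving any garbage entangled with the query register, and it must do so while the $B_s$ calls can be scheduled in parallel as the algorithm specifies.  Standard uncomputation handles the garbage once the forward circuit is fixed; substituting the resulting reversible oracle into the BV template above then produces the $O(k\log k)$-query exact quantum algorithm claimed in Theorem \ref{theorem:Nonadapt_klogk}.
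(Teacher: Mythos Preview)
Your outer Bernstein--Vazirani step over $\mathbb{Z}_k$ is correct and coincides with the paper's argument. The gap is entirely in the reduction that is supposed to build $IPK_s$ from $O(k\log k)$ black-peg queries. Your identity $IPK_s(x)=\sum_{c,d}cd\,N_{c,d}(x)$ involves $k^2$ unknown counts, and the sentence ``for each color $c$ a padded query string \ldots\ routes one black-peg answer into a known linear combination of the $N_{c,\cdot}(x)$'' produces only $k$ scalar equations---one per color---which is nowhere near enough to recover the particular combination $\sum_d d\,N_{c,d}(x)$ (let alone the full double sum) for arbitrary $x\in[k]^n$. Your option (a), ``$O(k)$ queries per output bit of the modular sum,'' is never made concrete, and your option (b) is a \emph{different} outer algorithm (run BV over $\mathbb{Z}_2$ once per bit of $s$) that does not implement the $IPK_s$ oracle required by Algorithm~\ref{algorithm:test}.

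The idea you are missing is to bit-slice the \emph{query} rather than the output or the secret. Writing each $x_i=\sum_{j=1}^{\lceil\log k\rceil}2^{j-1}x_i(j)$ gives
\[
IPK_s(x)=\sum_{j=1}^{\lceil\log k\rceil}2^{j-1}\,IPT_s\bigl(x(j)\bigr)\bmod k,
\]
where $x(j)\in[2]^n$ is the $j$-th bit-slice and $IPT_s:[2]^n\to\mathbb{Z}_k$ is the mod-$k$ inner product restricted to binary inputs (Lemma~\ref{lemmaIPK}). With a binary argument there are only two position classes $V_0,V_1$, and Lemma~\ref{lemmaIPT} shows that the $k$ parallel queries $y^{(c)}$ defined by $y^{(c)}_i=c$ on $V_1$ and $y^{(c)}_i=0$ on $V_0$ yield a simple linear system from which every $v_c=|\{i\in V_1:s_i=c\}|$ is recovered, hence $IPT_s(y)=\sum_c c\,v_c\bmod k$. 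This gives exactly $k\lceil\log k\rceil$ calls to $B_s$ per coherent evaluation of $IPK_s$, all arrangeable in parallel with standard uncomputation, which is precisely the content your sketch left unspecified.
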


\begin{proof}
Our algorithm is described in Algorithm \ref{nonadaptive}. The main idea is to use a generalized version of the Bernstein-Vazirani algorithm \cite{Bernstein1997} by calling  $IPK_s$ one time to find $s$ and  we further show that  $IPK_s$ can be constructed by calling the black-peg function $B_s$ $O(k \log k)$ times in parallel. The process of Algorithm \ref{nonadaptive} can be depicted in Figure \ref{fig:nonadapt1}. Now assume that  $IPK_s$ is accessible. The state in Algorithm \ref{nonadaptive} evolves as follows.

\begin{figure}[htbp]  
\centering\includegraphics[width=12cm]{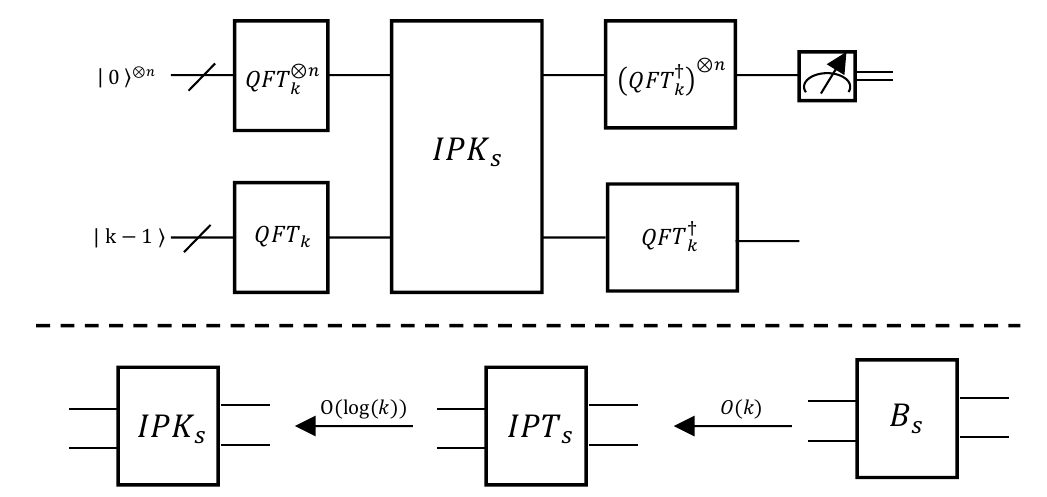} 
\caption{The circuit diagram of Algorithm \ref{nonadaptive} is depicted above the dashed line. The idea of how to construct $IPK_s$ is shown below the dashed line, where $A \stackrel{O(t)}{\longleftarrow} B$ means that  $A$ can be implemented by $O(t)$ copies of  $B$.}       
\label{fig:nonadapt1}   
\end{figure}

First, we prepare the initial state $\ket{\Phi_0}=\ket{0}^{\otimes n}\ket{k-1}$, where there are $n+1$ registers and each one is associated with a $k$-dimensional Hilbert space.

Second, after  quantum Fourier transform $QFT_k ^{\otimes n+1}$, the initial  state  is changed to
\begin{equation*}
    \ket{\Phi_1} = \frac{1}{\sqrt{k^n}} \sum _{x\in [k]^n } \ket{x}\ket{\phi},
\end{equation*}
where 
\begin{equation*}
    \ket{\phi} = \frac{1}{\sqrt{k}} \sum _{j=0}^{k-1} \omega^{k-j}\ket{j}
\end{equation*}
with $\omega=e^{2\pi i /k}$.

At the third step, apply  the quantum oracle of  $IPK_s$
\begin{equation*}
    IPK_s\ket{x}\ket{y} \longrightarrow \ket{x}\ket{(IPK_s(x) + y) \bmod k}. 
\end{equation*}
Then, the state evolves to
\begin{align*}
    \ket{\Phi_2}&=IPK_s\ket{\Phi_1}\\
    &= \frac{1}{\sqrt{k^n}} \sum _{x\in [k]^n } \omega^{IPK_s(x)} \ket{x}\ket{\phi}~~~~(\text {by Lemma}~ \ref{IPKS})\\
    &=\frac{1}{\sqrt{k^n}} \sum _{x\in [k]^n } e^{\frac{2\pi i \left(\sum_{j=1}^{j=n} s_j \cdot x_j\right) \bmod k}{k}}\ket{x}\ket{\phi}\\
    &= \frac{1}{\sqrt{k^n}} \sum _{x\in [k]^n } e^{\frac{2\pi i \sum_{j=1}^{j=n} s_j \cdot x_j }{k}}\ket{x}\ket{\phi}\\
    &= \frac{1}{\sqrt{k^n}} \sum _{x=x_1\dots x_n\in [k]^n } \prod_{j=1}^{j=n} e^{\frac{2\pi i s_j \cdot x_j }{k}}\ket{x_1\dots x_n}\ket{\phi}\\
     &= \frac{1}{\sqrt{k}}\sum_{x_1=0}^{k-1} e^{\frac{2\pi i s_1 \cdot x_1 }{k}}\ket{x_1} \otimes \dots \otimes \frac{1}{\sqrt{k}} \sum_{x_n=0}^{k-1} e^{\frac{2\pi i s_n \cdot  x_n }{k}}\ket{x_n}\ket{\phi}.
\end{align*}

At the fourth step, after applying inverse quantum Fourier transform $(QFT^\dagger) ^{\otimes n+1}$, we get the state 
\begin{equation*}
    \ket{\Phi_3}=\ket{s_1 s_2\dots s_n}\ket{k-1}.
\end{equation*}
Finally, the secret string $s=s_1s_2\dots s_n$ can be obtained with certainty after measuring the  first $n$ registers in the computational basis.

In the above procedure,  the $IPK_s$ oracle  is queried once and  can be constructed with $O(k \log k)$ queries to the black-peg function $B_s$ based on Lemma \ref{lemmaIPT}
and Lemma \ref{lemmaIPK}. Hence, the complexity of Algorithm \ref{nonadaptive}  with respective to $B_s$   is $O(k \log k)$. \end{proof}

\begin{lemma}\label{IPKS} Let   $IPK_s\ket{x}\ket{y} \longrightarrow \ket{x}\ket{(IPK_s(x) + y) \bmod k}$.  Then, for  $$\ket{\phi} = \frac{1}{\sqrt{k}} \sum _{j=0}^{k-1} \omega^{k-j}\ket{j}$$  with $\omega=e^{2\pi i /k}$, we have
\begin{equation*}
    IPK_s\ket{x}\ket{\phi} = \omega ^{IPK_s(x)}\ket{x}\ket{\phi}. 
\end{equation*}
\end{lemma}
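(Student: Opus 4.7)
The plan is to carry out a direct phase-kickback calculation, entirely analogous to the standard Bernstein--Vazirani phase kickback but lifted from $\mathbb{Z}_2$ to $\mathbb{Z}_k$. The key observation is that $\ket{\phi}$ is (up to normalization) a Fourier basis vector, and hence an eigenvector of every modular-addition operator $\ket{j}\mapsto \ket{(c+j)\bmod k}$ on the second register, with eigenvalue $\omega^{c}$. So once I identify this eigenvalue relation, the whole lemma reduces to reading off $c = IPK_s(x)$ when the first register holds $\ket{x}$.

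First I would simplify the form of $\ket{\phi}$ using $\omega^{k}=1$, namely $\omega^{k-j}=\omega^{-j}$, so that $\ket{\phi}=\frac{1}{\sqrt{k}}\sum_{j=0}^{k-1}\omega^{-j}\ket{j}$. Next I would apply $IPK_s$ term by term:
\begin{equation*}
IPK_s\ket{x}\ket{\phi}
= \frac{1}{\sqrt{k}}\sum_{j=0}^{k-1}\omega^{-j}\,\ket{x}\ket{(IPK_s(x)+j)\bmod k}.
\end{equation*}
Then I would perform the change of summation variable $j' = (IPK_s(x)+j)\bmod k$, so that $j \equiv j' - IPK_s(x) \pmod k$ and, using $\omega^{k}=1$, $\omega^{-j} = \omega^{IPK_s(x)}\,\omega^{-j'}$. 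Pulling the $x$-dependent factor $\omega^{IPK_s(x)}$ outside the sum over $j'$ and recognizing the remaining sum as $\sqrt{k}\,\ket{\phi}$ gives exactly
\begin{equation*}
IPK_s\ket{x}\ket{\phi}=\omega^{IPK_s(x)}\ket{x}\ket{\phi},
\end{equation*}
as claimed.

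There is essentially no hard step here; the only thing one must be careful about is the indexing convention of $\ket{\phi}$ (whether the exponent is $\omega^{-j}$ or $\omega^{+j}$), because this choice is what makes the kicked-back phase come out as $\omega^{+IPK_s(x)}$ rather than $\omega^{-IPK_s(x)}$. The factor $\omega^{k-j}$ in the definition of $\ket{\phi}$ (as opposed to $\omega^{j}$) is precisely tuned so that the outer phase picked up in the change of variables is $\omega^{+IPK_s(x)}$, matching the constructive use of this state in the algorithm. Apart from noting this sign/convention issue, the proof is a two-line substitution that I would present exactly as above.
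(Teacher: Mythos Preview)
Your proof is correct and follows essentially the same approach as the paper: both apply the oracle term by term and perform the change of summation variable $j' = (IPK_s(x)+j)\bmod k$, using $\omega^{k}=1$ to extract the global phase $\omega^{IPK_s(x)}$ and recognizing that $j'$ ranges over $\{0,\dots,k-1\}$ so that the remaining sum reproduces $\ket{\phi}$. Your preliminary simplification $\omega^{k-j}=\omega^{-j}$ is a cosmetic streamlining of the same computation.
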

\begin{proof}
Let $m_j=(IPK_s(x)+j) \bmod k$ for $j=0, 1,\cdots, k-1$. Then $IPK_s(x)+j-m_j=t_jk$ for some integer $t_j$, that is, 
\begin{equation*}
    j=t_jk+m_j-IPK_s(x).
\end{equation*}
Then we have 
\begin{align}
IPK_s\ket{x}\ket{\phi}&=\frac{1}{\sqrt{k}} \sum _{j=0}^{k-1} \omega^{k-j}IPK_s\ket{x}\ket{j}\\
&=\frac{1}{\sqrt{k}} \sum _{j=0}^{k-1} \omega^{k-j}\ket{x}\ket{(IPK_s(x)+j) \bmod k}\\
&=\sum _{j=0}^{k-1} \omega^{k-(t_jk+m_j-IPK_s(x))}\ket{x}\ket{m_j} \label{eq1}\\
&=\omega^{IPK_s(x)}\ket{x}\sum _{j=0}^{k-1} \omega^{k-m_j}\ket{m_j} \label{eqj}\\
&=\omega^{IPK_s(x)}\ket{x}\sum _{l=0}^{k-1} \omega^{k-l}\ket{l}\\
&=\omega^{IPK_s(x)}\ket{x}\ket{\phi},
\end{align}
where note that in Eq. (\ref{eq1}),   $\omega^{t_jk}=1$ holds for integer $t_j$, and in Eq. (\ref{eqj}), when $j$  traverses all the values in $\{0, 1, \cdots, k-1\}$, so does $m_j$.

\end{proof}

\begin{lemma}\label{lemmaIPK}
Given $s,x \in [k]^n$, $IPK_s(x)$ can be computed by calling  $IPT_s$  $ \lceil \log (k) \rceil $ times in parallel. 
\end{lemma}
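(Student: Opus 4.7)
The plan is to compute $IPK_s(x)$ via a binary-decomposition trick. Set $m = \lceil \log k \rceil$ and write each coordinate $x_i \in [k]$ in binary as $x_i = \sum_{j=0}^{m-1} 2^j x_i^{(j)}$ with $x_i^{(j)} \in \{0,1\}$. Collecting the $j$-th bit across all positions yields $m$ binary strings $x^{(j)} = x_1^{(j)} x_2^{(j)} \cdots x_n^{(j)} \in [2]^n$.

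The key arithmetic identity I will use is
\begin{equation*}
IPK_s(x) \;=\; \Big(\sum_{i=1}^n s_i x_i\Big) \bmod k \;=\; \Big(\sum_{j=0}^{m-1} 2^j \sum_{i=1}^n s_i x_i^{(j)}\Big) \bmod k \;\equiv\; \sum_{j=0}^{m-1} 2^j \cdot IPT_s(x^{(j)}) \pmod{k},
\end{equation*}
obtained by substituting the binary expansion into $\sum_i s_i x_i$ and reducing modulo $k$. This already gives the classical reduction: the $m$ values $IPT_s(x^{(0)}), \dots, IPT_s(x^{(m-1)})$ depend on functionally disjoint inputs, so the $m$ invocations of $IPT_s$ can be issued in parallel.

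For the quantum oracle I would proceed as follows. Each $[k]$-valued register $|x_i\rangle$ is naturally stored on $m$ qubits carrying its binary expansion. Regrouping these qubits across $i$ by bit position partitions the input register into $m$ disjoint $n$-qubit blocks holding $x^{(0)}, \dots, x^{(m-1)}$; this is a fixed wire re-labelling, not an adaptive operation. I would then apply $IPT_s$ in parallel on each of the $m$ blocks, using fresh $\mathbb{C}^k$ ancillas initialised to $|0\rangle$ to record the outputs $|IPT_s(x^{(j)})\rangle$, coherently compute $z = \sum_{j=0}^{m-1} 2^j \cdot v_j \bmod k$ from the recorded $v_j$'s, add $z$ into the output register modulo $k$, and finally uncompute the ancillas by running the same $m$ parallel $IPT_s$ queries in reverse.

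The main obstacle I anticipate is a minor bookkeeping subtlety: when $k$ is not a power of two, the $m$-qubit representation admits $2^m - k$ extra basis states outside $[k]$, and one must check that the construction is correct on the intended input subspace $[k]^n$ and that the bit-position regrouping indeed preserves the parallel (non-adaptive) structure of the queries. Both points are straightforward to verify, after which the total cost is the advertised $\lceil \log k \rceil$ parallel invocations of $IPT_s$.
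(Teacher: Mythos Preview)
Your proof is correct and follows essentially the same route as the paper: write each $x_i$ in binary with $m=\lceil\log k\rceil$ bits, regroup by bit position to obtain the binary strings $x^{(j)}\in[2]^n$, and use the identity $IPK_s(x)\equiv\sum_j 2^j\,IPT_s(x^{(j)})\pmod k$. The paper's proof stops at this arithmetic identity; your additional discussion of the coherent oracle implementation (bit-wise wire regrouping, parallel calls into fresh ancillas, modular combination, then uncomputation) is extra detail the paper omits, and the uncompute step would cost a second batch of $m$ calls, but this only affects a constant factor and not the $O(k\log k)$ overall bound.
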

\begin{proof}
 Given  $s = s_1s_2 \dots s_n \in [k]^n$ and $x = x_1x_2 \dots x_n \in [k]^n$, let $m= \lceil \log(k) \rceil$. $x_i(1) x_i(2) ... x_{i}(n)$ denotes the binary representation of $x_i$. There is   $x_i = \sum_{j=1}^{m} 2^{j-1} x_i(j)$ with $x_i(j) \in \{0, 1\}$. Then
    \begin{equation*}
    \begin{split}
      IPK_s(x)&=  \sum_{i = 1}^{n } s_i \cdot x_i \bmod k\\
        & = \sum_{i = 1} ^{n } \sum_{j = 1} ^{m } 2^{j-1} x_i(j) \cdot s_i \bmod k  \\
        & = \sum_{j = 1} ^{m } 2^{j-1} (\sum_{i = 1} ^{n}  x_i(j) \cdot s_i) \bmod k  \\
        & = \sum_{j = 1}^{m } 2^{j-1} ( \sum_{i = 1} ^{n}  x_i(j) \cdot s_i \bmod k) \bmod k\\
         & = \sum_{j = 1}^{m } 2^{j-1} IPT_s(x(j) ) \bmod k,
    \end{split}
    \end{equation*}
 where $x(j) = x_1(j) x_2(j) ... x_{n}(j)$. 
    
\end{proof}

\begin{lemma}\label{lemmaIPT}
Given $s\in [k]^n$ and $x\in [2]^n$, $IPT_s(x)$ can be computed by using $k$  black-peg queries $B_s$ in parallel.  
\end{lemma}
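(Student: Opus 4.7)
The plan is to realize $IPT_s(x)$ using $k$ parallel black-peg queries, one per color. For each $c \in [k]$, I will issue the query string $y^c \in [k]^n$ defined by $y^c_i = c$ when $x_i = 1$ and $y^c_i = 0$ when $x_i = 0$; in other words, the binary string $x$ is used as a mask that imprints the color $c$ on the ``active'' positions and leaves the rest filled with a single fixed background color.

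Setting $V = \{i : x_i = 1\}$, $N_c = |\{i \in V : s_i = c\}|$ and $A = |\{i \notin V : s_i = 0\}|$, a direct count of the coincidences between $s$ and $y^c$ gives
\begin{equation*}
B_s(y^c) = N_c + A,
\end{equation*}
so the ``offset'' $A$ is \emph{independent} of $c$. Verifying this identity is the crux of the argument. Once it is in hand, using $\sum_{c \in [k]} N_c = |V|$, which depends only on $x$, one recovers
\begin{equation*}
A = \frac{1}{k}\Big(\sum_{c \in [k]} B_s(y^c) - |V|\Big), \qquad N_c = B_s(y^c) - A,
\end{equation*}
and hence
\begin{equation*}
IPT_s(x) = \Big(\sum_{c \in [k]} c\, N_c\Big) \bmod k.
\end{equation*}

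For the quantum implementation, for each $c$ I would use an auxiliary register to coherently compute the coordinate-wise map $\ket{x} \mapsto \ket{x}\ket{y^c}$, apply the oracle $B_s$ on that register to obtain $\ket{y^c}\ket{B_s(y^c)}$, and uncompute $y^c$. All $k$ oracle invocations act on disjoint auxiliaries and are therefore made in parallel. A final reversible arithmetic block combines the collected outputs $B_s(y^0), \ldots, B_s(y^{k-1})$ with the Hamming weight $|V|$ of $x$ to output $IPT_s(x) \bmod k$. The main obstacle is conceptual, namely choosing a ``mask with constant background'' construction that makes the additive offset $A$ the same across all $c$; once this is observed, the remaining integer arithmetic and its reversible implementation are routine, and the $c = 0$ query reduces to $y^0 = 0^n$ with no special handling required for the case $V = \emptyset$.
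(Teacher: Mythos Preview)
Your proposal is correct and follows essentially the same approach as the paper: the same mask-with-constant-background queries $y^c$, the same decomposition $B_s(y^c)=N_c+A$ with a $c$-independent offset, and the same recovery of $IPT_s(x)=\sum_c c\,N_c \bmod k$ via $\sum_c N_c=|V|$. Your additional remarks on the coherent/reversible implementation and the edge cases $c=0$, $V=\emptyset$ go slightly beyond what the paper spells out but are consistent with it.
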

\begin{proof}
We now describe how to compute $IPT_s(x)=\sum_{i} s_i \cdot x_i \bmod k $ using  black-peg  queries $B_s$.

Given $x = x_1x_2 \dots x_n \in [2]^n$, we define $k$  strings $y^{c} \in [k]^n$ for $c = 0, 1, ..., k - 1$ as follows:
    \begin{equation*}
        y_i^{c}= 
        \begin{cases}
            & c, ~~~x_i=1, \\
            & 0, ~~~x_i=0.
        \end{cases}
    \end{equation*}
Feed the black-peg function $B_s$ with $y^{c}$,  and record the results as
    \begin{equation*}
        n_c = B_s(y^{c}) = |\{ i \in \{1,2,\dots, n\}: s_i = y_i^{c} \}|.
    \end{equation*}
Let 
    \begin{align*}
    &V = |\{ i \in \{1, 2, \dots, n\}: x_i = 1\}|,\\
         &v_c  = |\{i | s_i = c, i \in V\}|,\ c = 0, 1, ..., k - 1, \\
            &a  = |\{i | s_i = 0, i \in \{1,2,\dots, n\} - V \}|. 
    \end{align*}
For $c = 0, 1, ..., k - 1$,  obviously there are 
\begin{align}
        &n_c  = v_c + a, \label{njvja}\\
         &\sum_{c \in [k]} v_c  = |V|.\label{vjV}
\end{align}
Combine Eq.~\eqref{njvja} and Eq.~\eqref{vjV}, we have 
\begin{equation*}
    \sum_{c \in [k]} n_c  = |V| + k \cdot a.
\end{equation*}
Hence, we get 
\begin{equation*}
v_c  = n_c - \frac{\sum_{c \in [k]} n_c - |V| }{k}  
\end{equation*}
for $c = 0, 1, ..., k - 1$.
That is, we can use the query results $n_c$ to compute $v_c$ for $c = 0, 1, ..., k - 1$. Now we are ready to compute $IPT_s(x)$:
    \begin{equation*}
        IPT_s(x)= \sum_{i \in V} s_i \bmod k \\
        = \sum_{c \in [k]} c\cdot v_j\bmod k.
    \end{equation*}

As a result,  $IPT_s(x)$ can be computed by $k$ black-peg queries in parallel.
\end{proof}
\end{document}